\newcommand{\RN}[1]{%
	\textup{\uppercase\expandafter{\romannumeral#1}}%
}
\def\bp{{\bar\partial}}
\def\bfs{\boldsymbol}
\newcommand{\ms}{\medskip}
\def\pa{\partial}
\def\sm{\setminus}
\def\wh{\widehat}
\def\ve{\varepsilon}
\def\Re{ \mathrm{Re}}
\def\Im{ \mathrm{Im}}
\def\C{\mathbb{C}}
\def\E{\mathbf{E}}
\def\R{\mathbb{R}}
\begin{document}

\title{The high temperature crossover for general 2D Coulomb gases
\thanks{Financial support to Gernot Akemann by the German Research Foundation (DFG) through CRC1283 ``Taming uncertainty and profiting from randomness and low regularity in analysis, stochastics and their applications" and to Sung-Soo Byun by Samsung Science and Technology Foundation (SSTFBA1401-01) are acknowledged.
Both authors are equally grateful to the DFG's International Research Training Group IRTG 2235 supporting the Bielefeld-Seoul exchange programme.}
}
%\subtitle{Do you have a subtitle?\\ If so, write it here}

%\titlerunning{Short form of title}        % if too long for running head

\author{Gernot Akemann         \and
       Sung-Soo Byun %etc.
}

\institute{Gernot Akemann   \at
            Faculty of Physics, Bielefeld University, P.O. Box 100131, 33501 Bielefeld, Germany \\
              \email{akemann@physik.uni-bielefeld.de}           %  \\
%             \emph{Present address:} of F. Author  %  if needed
           \and
           Sung-Soo Byun \at
            Department of Mathematical Sciences, Seoul National University,\newline Seoul, 151-747, Republic of Korea
            \\
             \email{sungsoobyun@snu.ac.kr}   
}

\date{}
% The correct dates will be entered by the editor

\maketitle

\begin{abstract}
We consider $N$ particles in the plane influenced by a general external potential that are subject to the Coulomb interaction in two dimensions at inverse temperature $\beta$. At large temperature, when scaling $\beta=2c/N$ with some fixed constant $c>0$, in the large-$N$ limit we observe a crossover from Ginibre's circular law or its generalization to the density of non-interacting particles at $\beta=0$. Using several different 
methods we derive a partial differential equation of generalized Liouville type for the crossover density. For radially symmetric potentials we present some 
asymptotic results and give examples for the numerical solution of the crossover density. 
These findings generalise previous results when the interacting particles are confined to the real line. In that situation we derive an integral equation for the resolvent valid for a general potential and present the analytic solution for the density in case of a Gaussian plus logarithmic potential.
\keywords{2D Coulomb gases \and normal random matrices \and high temperature crossover}
% \PACS{PACS code1 \and PACS code2 \and more}
% \subclass{MSC code1 \and MSC code2 \and more}
\end{abstract}

%%%%%%%%%%%%%%%%%%%%%%%%%%%%%%%%%%%%%%%%%%%%%%%%%%%
\section{Introduction and Main Results}\label{Main}

Particle systems that interact logarithmically - the Coulomb repulsion in two dimensions (2D) -  and that are subject to a confining potential, at temperature $T$ parame\-trised by $\beta^{-1}=k_BT$, enjoy an intimate relationship with Random Matrix Theory, see e.g., \cite{forrester1998exact,forrester2010log}. Here, one has to distinguish two cases. 

When the particles are constrained to the real line or a subset of it, such systems can be realised as eigenvalues of random $N\times N$ matrices whose entries follow a  Gaussian or more general distribution. In that case, the inverse temperature 
$\beta$ takes the specific values 1, 2 and 4 for self-adjoined matrices with real, complex or quaternionic entries, and we refer to \cite{Mehta} for a discussion of these classical Gaussian ensembles. For general $\beta>0$ - the so-called $\beta$-ensembles - other realisations exist, such as tri-diagonal matrices \cite{dumitriu2002matrix} or in terms of Dyson's Brownian motion \cite{dyson1962brownian}, see also \cite{MR3078021} for an invariant realisation. While for large $N$ on a global scale, the limiting spectral density is given by Wigner's semi-circle for all $\beta>0$ for ensembles with Gaussian potential, on a local scale the statistics strongly depends on $\beta$. For the classical ensembles the local statistics of particles (or eigenvalues) is very well understood and known to be universal, (see e.g., \cite[Chapter6]{akemann2011oxford}), whereas progress for $\beta$-ensembles has been rather recent. It is given in terms of different stochastic differential operators in the bulk and at the edges, and we refer to \cite{ramirez2011beta,valko2009continuum,valko2017sine}.

Turning to the case when the particles move in the plane, thus representing a true 2D Coulomb gas, much less is known for general $\beta>0$. First, only for matrices with complex Gaussian entries without further symmetry - the complex Ginibre ensemble - the corresponding complex eigenvalues yield a Coulomb gas at $\beta=2$. For real or quaternionic matrix entries one obtains point processes of Pfaffian type \cite{edelman1997probability,lehmann1991eigenvalue,ginibre1965statistical} that differ from the standard 2D Coulomb gas at $\beta=1$ or 4\footnote{For a different interpretation of these real and quaternionic Ginibre ensembles as a multi component Coulomb gas we refer to \cite{forrester2016analogies}. }. Only normal random matrices with complex or quaternionic entries provide realisations at $\beta=2$ and $4$, see e.g., \cite{chau1998structure} and \cite{hastings2001eigenvalue}, respectively. The eigenvalue statistics of complex normal and complex Ginibre matrices happen to agree, but not their eigenvector statistics \cite{chalker1998eigenvector}. Again, on a global scale the limiting spectral density is given by Girko's circular law for all $\beta>0$ for a Gaussian potential. 
Relatively little is known about the local statistics beyond $\beta=2$. Only at the particular value $\beta=2$ the point process is determinantal, and local universality has been shown for invariant (see e.g.,\cite{GT10,hedenmalm2017planar,AKM14,akemann2016universality,berman2008determinantal}) and Wigner ensembles \cite{tao2015random}. For general $\beta$ the low temperature limit corresponding to $\beta\gg1$ is subject of on going research (see e.g.,\cite{Ameur18low}), due to the conjectured condensation on the so-called Abrikosov lattice, and we refer to \cite{serfaty2017microscopic} for a recent review and references. 

Recently, the opposite high temperature limit $\beta\to0$ has been studied for $\beta$-ensembles with real \cite{allez2012invariant,MR3078021,duy2015mean} or real positive eigenvalues \cite{allez2012invariant2}. Here, $\beta$ is not kept fixed in the large-$N$ limit and a different scaling $\beta=\frac{2c}{N}$ with a constant $c\in(-1,\infty)$, was identified in \cite{allez2012invariant}. There, the solution for the limiting global density $\rho_c(x)$ was given in terms of parabolic cylinder functions and was shown to interpolate between Wigner's semi-circle distribution at large $c\gg1$ and a Gaussian one at $c=0$. Furthermore, allowing for a weakly attracting interaction $c<0$, it is believed to converge towards a Dirac delta when $c\downarrow -1$. In this article we will study the corresponding limit for genuine 2D Coulomb gases in the plane, with a general confining potential. The possibility taking of such a limit, leading to a crossover between the circular law and a Gaussian density for a Gaussian potential, was already mentioned in \cite{chafai2016concentration,bolley2017dynamics}.  We will find an extended parameter range  $c\in(-2,\infty)$, with convergence to a Dirac delta  when $c\downarrow -2$. The latter was already observed and in fact proven for a Gaussian plus linear potential in \cite{caglioti1992special,caglioti1995special}. There, the limiting behaviour of so-called vortex systems in the plane was analysed and the existence of a solution for the limiting global density was shown. 

Before giving more details and presenting our results, let us briefly comment on our methods. Our approach will be threefold, combining  rigorous and heuristic methods. First, we start by representing our particle system in the plane by the stationary solution of a 2D diffusion process. Assuming its well-posedness for suitably chosen potentials,  we use  It\^o's calculus to derive an integral equation that relates the 1- and 2-point correlation functions, see Theorem \ref{Ward} below. The same theorem follows from our second method, the so-called loop equation or Ward identity, with less assumptions. These two methods have the advantage of being exact at finite-$N$, thus serving as a  starting point for both global and local analysis, cf. \cite{zabrodin2006large} for an expansion of the free energy and 2-point correlation function at $\beta>0$ in $1/N$. What we are currently lacking is a precise estimate for the factorisation of the 2-point function for general $\beta$.
Therefore, we will use a third heuristic method, the saddle point or variational approach (also called large deviations) to derive a mean field equation for the limiting global density. This approach has the advantage of making transparent, which terms contribute in which large-$N$ limit. On the one hand, keeping $\beta>0$ fixed always leads to the circular law or its generalisation, whereas scaling $\beta=\frac{2c}{N}$ leads to an interpolation between the circular law and the Gaussian distribution or their generalisations. Our methods can to large extent be pursued in parallel for particles on the real line or in the plane. This allows us to slightly generalise previous results \cite{allez2012invariant} on the line to general potentials, for which we will give an example.

Let us formulate our main results. In this section we will focus only on particles in the plane, representing a true 2D Coulomb gas. 
We study an ensemble of $N$ charged particles, that interact logarithmically under the influence of an external confining potential $Q$. 
Labelling the particle's positions by $\bfs{\zeta}=(\zeta_1,
\cdots,\zeta_{N}) \in \C^N$, the associated Gibbs measure at inverse temperature $\beta$ is given by 
\begin{align} \label{2D Cou m}
\displaystyle d\mathbf{P}_N(\bfs{\zeta})=  
p_N (\bfs{\zeta})\prod_{j=1}^{N}dA(\zeta_j)\ ,\ \ \ 
p_N (\bfs{\zeta})=  
\frac{1}{ Z_N}  \prod_{j>k=1}^N |\zeta_j-\zeta_k|^{\beta} e^{-m\sum_{j=1}^N Q(\zeta_j)} .
\end{align}
Here, $dA$ is the area measure (i.e., 2-dimensional Lebesgue measure divided by $\pi$), $p_N (\bfs{\zeta})$ is the joint density of particles, and $Z_N$ stands for the normalising partition function. 
The choice of the scaling parameter $m$, that may depend on $N$ and $\beta$, determines the limiting behaviour of our ensemble. In order to distinguish its r\^ole from 
$N$, several authors identify it with the inverse Planck constant $1/\hbar$, see e.g., \cite{zabrodin2006large} \footnote{Note that these and several other authors \cite{AKM14} use a different convention, denoting $\beta= 2\tilde{\beta}$.}.
Throughout this article, we assume that $Q$ is smooth and sufficiently large near the infinity (e.g., $Q(z) \gg \log |z|$) so that $Z_N < \infty$.

The quantities determining the system $\bfs{\zeta}=\{ \zeta_j\}^{n}_{j=1}$ are the following $k$-point correlation functions defined as the expectation values $\mathbf{E}_N$ with respect to the Gibbs measure \eqref{2D Cou m}:  
\begin{equation} \label{k-pt cor}
R_{N,k}(z_1,\cdots,z_{k}):= N^k\ \mathbf{E}_N\left( \prod_{l=1}^N\rho_N(z_j)\right)\ , \quad \text{for}\quad k=1,\cdots, N\ ,
\end{equation}
when all arguments are mutually distinct, $z_i\neq z_j$, $\forall i,j=1,\ldots,k$, and zero for any pair of arguments coinciding.
Here, 
\begin{equation}
\label{norm-count}
\rho_N(z)=\frac1N \sum_{j=1}^N \delta(z-\zeta_j)
\end{equation}
is the normalised counting function.
We remark that once properly normalised, the $R_{N,k}(z_1,\ldots,z_{k})$ can be interpreted as the probability to find $k$ particles at given positions $z_1,\ldots,z_{k}$.

We begin with our first method, It\^{o}'s stochastic calculus. First, we observe that $p_N(\zeta)$ in \eqref{2D Cou m} is the stationary solution of the following 2D diffusion process
\begin{equation} \label{diff QzIntro}
d\zeta_j(t)=\sqrt{2} \, dz_j(t)-2m \, \bp Q(\zeta_j(t)) dt +\beta \sum_{k: k\not= j}^N \frac{1}{\overline{\zeta_j}(t)-\overline{\zeta_k}(t)}dt\ ,
\end{equation}
where $z_j(t)$ a standard 2D Brownian motion. 
In the case of the Gaussian potential, the well-posedness of such a system was shown by Bolley, Chafa\"{i} and Fontbona, see \cite{bolley2017dynamics}. We refer the reader to \cite[Section 4.3]{anderson2010introduction} and references therein for some basic properties of such dynamical systems. 
Applying It\^o's lemma for complex variables, we can then prove the following theorem.
In fact we state a version that follows from the Ward identities as shown in Section~\ref{WARD Sec}, with weaker assumptions on the confining potential $Q$.

\begin{theorem}\label{Ward}
Given Gibbs measure \eqref{2D Cou m} with a $C^2$-smooth potential $Q$,
the following relation between 1- and 2-point correlation functions holds for every finite $N$: 
\begin{equation} \label{LOOP}
\frac{\beta}{2} \int_{\C} \frac{R_{N,2}(\zeta,\eta)  }{\zeta-\eta} dA(\eta)= m(\pa_\zeta Q(\zeta)) R_{N,1}(\zeta) + \pa_\zeta R_{N,1}(\zeta).
\end{equation}
\end{theorem}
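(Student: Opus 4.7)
My plan is to derive~\eqref{LOOP} as a Ward (loop) identity obtained from reparametrization invariance of the partition function $Z_N$. Concretely, I would fix a smooth, compactly supported vector field $\psi\colon\C\to\C$ and perform the near-identity change of variables $\zeta_j\mapsto\zeta_j+\varepsilon\,\psi(\zeta_j)$ in the $N$-fold integral defining $Z_N$. Since $Z_N$ does not depend on the choice of coordinates, the coefficient of $\varepsilon$ in the resulting expansion must integrate to zero. Three contributions arise at first order: the Jacobian of the underlying real diffeomorphism produces a factor $1+2\varepsilon\,\Re\,\pa\psi(\zeta_j)+O(\varepsilon^2)$ for each $j$; each Vandermonde factor $|\zeta_j-\zeta_k|^{\beta}$ contributes $1+\beta\varepsilon\,\Re\tfrac{\psi(\zeta_j)-\psi(\zeta_k)}{\zeta_j-\zeta_k}+O(\varepsilon^2)$; and each factor $e^{-mQ(\zeta_j)}$ contributes $1-2m\varepsilon\,\Re[\psi(\zeta_j)\pa Q(\zeta_j)]+O(\varepsilon^2)$, the Taylor expansion here being the only place where $Q\in C^2$ is used. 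Replacing $\psi$ by $i\psi$ yields the same identity with $\Im$ in place of $\Re$, so the real parts may be dropped and one arrives at the complex Ward identity
\begin{equation*}
\int_{\C^N}\Big[\,2\sum_{j}\pa\psi(\zeta_j)+\beta\sum_{j<k}\frac{\psi(\zeta_j)-\psi(\zeta_k)}{\zeta_j-\zeta_k}-2m\sum_{j}\psi(\zeta_j)\,\pa Q(\zeta_j)\Big]\,p_N(\bfs{\zeta})\prod_j dA(\zeta_j)=0.
\end{equation*}

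Next I would symmetrize the double sum: the quantity $(\psi(\zeta_j)-\psi(\zeta_k))/(\zeta_j-\zeta_k)$ is symmetric under exchange of $j$ and $k$, and a short rearrangement gives $\sum_{j<k}(\cdots)=\sum_{j\ne k}\psi(\zeta_j)/(\zeta_j-\zeta_k)$. Using the permutation invariance of $p_N$ together with the definitions in~\eqref{k-pt cor} then converts the identity into
\begin{equation*}
\int_{\C}\Big[\,2\,\pa\psi(\zeta)\,R_{N,1}(\zeta)+\beta\int_{\C}\frac{\psi(\zeta)\,R_{N,2}(\zeta,\eta)}{\zeta-\eta}\,dA(\eta)-2m\,\psi(\zeta)\,\pa Q(\zeta)\,R_{N,1}(\zeta)\Big]dA(\zeta)=0.
\end{equation*}
An integration by parts on the first term---legitimate because $\psi$ has compact support and $\int_{\C}\pa F\,dA=0$ for every compactly supported $F\in C^1(\C)$---replaces $2\,\pa\psi\,R_{N,1}$ by $-2\,\psi\,\pa R_{N,1}$. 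Since $\psi$ is arbitrary, the remaining bracketed integrand must vanish pointwise, and after dividing by $2$ this is precisely~\eqref{LOOP}.

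The main obstacle is analytic rather than algebraic. One has to verify absolute convergence in order to apply Fubini: the only candidate for a non-integrable singularity is the point $\eta=\zeta$, but $R_{N,2}(\zeta,\eta)$ inherits the Vandermonde factor $|\zeta-\eta|^{\beta}$ from the joint density, so $R_{N,2}(\zeta,\eta)/(\zeta-\eta)=O(|\zeta-\eta|^{\beta-1})$ is locally integrable for $\beta>0$. At infinity the growth hypothesis $Q(z)\gg\log|z|$ gives enough decay of $R_{N,1}$ and $R_{N,2}$ to kill all boundary contributions arising from the change of variables and the subsequent integration by parts. With these estimates in hand, the $C^2$-regularity of $Q$ is exactly what is needed to control the Taylor expansions above uniformly on compact sets, and~\eqref{LOOP} follows at any finite $N$ without recourse to It\^o's calculus or the well-posedness of the SDE~\eqref{diff QzIntro}.
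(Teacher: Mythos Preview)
Your proof is correct and follows essentially the same route as the paper's Ward-identity argument in Section~\ref{WARD Sec}: the near-identity reparametrization $\zeta_j\mapsto\zeta_j+\varepsilon\psi(\zeta_j)$ of $Z_N$, the $i\psi$ trick to drop the real parts, the rewriting in terms of $R_{N,1}$ and $R_{N,2}$, and the integration by parts against an arbitrary test function are all exactly as in the paper (which also gives a separate It\^o-calculus derivation in Section~\ref{Stochastic}). Your additional remarks on local integrability of $R_{N,2}(\zeta,\eta)/(\zeta-\eta)$ via the Vandermonde factor and on decay at infinity supply analytic detail that the paper leaves implicit.
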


Equation \eqref{LOOP} can be used as a starting point for a systematic expansion in the large-$N$ limit, 
cf. \cite{zabrodin2006large} for earlier work.
Let us introduce the connected 2-point correlation function
\begin{equation}\label{R2conn}
R_{N,2}^{\rm conn}(\zeta,\eta) := R_{N,2}(\zeta,\eta) -R_{N,1}(\zeta) R_{N,1}(\eta)\ . 
\end{equation}
For a nonvanishing $R_{N,1}(\zeta) \neq0$ we can then rewrite eq. \eqref{LOOP} as follows
\begin{equation} \label{LOOP2}
\frac{\beta}{2} \int_{\C} \frac{R_{N,1}(\eta)  -B_N(\zeta,\eta)}{\zeta-\eta} dA(\eta)
= m\pa_\zeta Q(\zeta)+ \pa_\zeta \log[R_{N,1}(\zeta)].
\end{equation}
Here, $B_{N}(\zeta,\eta) :=-{R_{N,2}^{\rm conn}(\zeta,\eta) }/{R_{N,1}(\zeta)}$ is defined such that it corresponds to the Berezin-kernel at $\beta=2$. While this is a well-studied object at $\beta=2$, little is known for general $\beta>0$, see however some remarks in \cite{AKM14}. In order to arrive at the mean field equation \eqref{MFE} below, that determines the limiting density in the particular large-$N$ limit that we consider, we would have to show that the connected 2-point function \eqref{R2conn} is sub-leading. This is equivalent to show the factorisation of the 2-point function on the global level - a property called mean field or propagation of chaos - and we expect it to hold up to order $O(N^{-2})$, cf. \cite{valko2017sine}.

Let us turn to the detailed analysis of the global large-$N$ behaviour of \eqref{2D Cou m} in the high temperature regime $\beta \rightarrow 0$.
It is clear that this regime implies weaker correlation among particles. In the extreme case $\beta\equiv0$, the particles become independent from each other, their $k$-point correlation functions trivially factorise and become proportional to $\prod_{j=1}^k e^{-Q(\zeta_j)}$, normalised with respect to the area measure. 
Our main purpose in this paper is to investigate the crossover phenomenon between fixed and vanishing $\beta$. 
For instance, in the case of a Gaussian potential $Q(\zeta)=|\zeta|^2$, we study the smooth interpolation between Ginibre's circular law and the Gaussian distribution. 
The possibility of such a crossover regime was already mentioned in \cite{bolley2017dynamics}. 
The precise scaling we have to impose in \eqref{2D Cou m} is to set 
\begin{equation} \label{beta n 2c}
m=1\quad \text{and} \quad \beta=\frac{2c}{N} \quad \text{for fixed} \quad c\in (-2,\infty).
\end{equation} 
Here, $c$ is kept fixed when $N\to\infty$, and we can allow for a weakly attracting interaction with negative $c$ as well. 
The same scaling \eqref{beta n 2c} was found on the real line in one dimension \cite{MR3078021}, however with fixed $c\in (-1,\infty)$.

On the other hand, for the more standard scaling 
\begin{equation} \label{m scale1}
m=\frac{\beta}{2} \, N, \quad \text{with fixed} \quad \beta>0\ ,
\end{equation}  
Chafa\"{i}, Hardy and Ma\"{i}da showed that if $\beta$ is bigger than $\beta_0 \log N/N$ for some constant $\beta_0$, there is no such crossover phenomenon
and the limiting global density follows \eqref{Laplacian growth} below, see 
\cite{chafai2016concentration}.

In Section~\ref{Saddlepoint}, we will heuristically calculate the \textit{free energy functional} $F[\rho]$ in terms of the probability density function $\rho$, that is  associated to the Gibbs measure \eqref{2D Cou m}. 
Here, we will utilize the saddle point method in the large-$N$ limit. In the  high temperature regime \eqref{beta n 2c} we obtain the following formula 
\begin{align} \label{free}
\begin{split}
F\equiv F_c[\rho]=&  \int_{\C} Q(\zeta) \rho(\zeta) dA(\zeta)-c \int_{\C^2} \log|\zeta-\eta| \rho(\zeta) \rho(\eta) dA(\zeta) dA(\eta)
\\
&+\int_{\C} \rho(\zeta) \log \rho(\zeta) dA(\zeta).
\end{split}  
\end{align} 
While the first line can be easily seen to follow from the energy in \eqref{2D Cou m}, the second line is the so-called entropy contribution.
The saddle point condition 
\begin{equation}
\label{SPeq}
0=\frac{\delta F_c[\rho] }{\delta\rho(\zeta)}=Q(\zeta) -2c \int_{\C} \log|\zeta-\eta| \rho(\eta)dA(\eta)+\log \rho(\zeta)+1\  
\end{equation}
is imposed in order to extremise the free energy. Equation \eqref{SPeq}  
has the limiting density $\rho_c(\zeta)=\lim_{N\to\infty}\frac1N R_{N,1}(\zeta)$ as its solution. Applying the Laplace operator 
$\Delta=\pa\bp$ to \eqref{SPeq} and using that its Green's function is the logarithm, we obtain that the crossover density $\rho_c$ satisfies \eqref{MFE} below.
This leads us to propose the following extension of \cite[Theorem 6.1]{caglioti1992special} for a general potential.
\begin{theorem}\label{Thmfe} 
The limiting density function $\rho_c$ minimises (resp., maximises) the free energy $F_c[\rho]$ for $c>0$, (resp., $<0$) and solves the following  \textit{mean field equation}: 
\begin{equation} \label{MFE}
0 = \Delta Q(\zeta)-c \, \rho_c(\zeta) + \Delta \log \rho_c(\zeta)\ .
\end{equation}
\end{theorem}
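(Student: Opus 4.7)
My plan is to combine variational calculus on the free energy $F_c[\rho]$ with the fact that in two dimensions the logarithm is, up to a constant, the Green's function of $\Delta = \pa\bp$, and then to argue extremality from the convex-analytic structure of $F_c$.

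First I would derive the saddle point equation by adding a Lagrange multiplier $\lambda$ for the probability constraint $\int_\C \rho\, dA = 1$ and setting $\delta\bigl(F_c[\rho] - \lambda \int \rho\, dA\bigr)/\delta \rho(\zeta) = 0$. The three terms of \eqref{free} then produce $Q(\zeta)$, $-2c\int_\C \log|\zeta-\eta|\rho(\eta)\, dA(\eta)$ (the two cross terms combining by symmetry of the kernel), and $\log \rho(\zeta) + 1$ respectively; absorbing $\lambda$ into the additive constant recovers exactly \eqref{SPeq}. Applying $\Delta_\zeta = \pa_\zeta \bp_\zeta$ to both sides and using that $\Delta_\zeta \log|\zeta - \eta|$ is a constant multiple of $\delta(\zeta - \eta)$---with the constant fixed by the $dA$-normalisation so that the logarithmic integral collapses to $\tfrac12 \rho_c(\zeta)$---produces the mean field equation \eqref{MFE}.

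Next I would argue extremality. The functional $F_c$ decomposes as a linear part $\int Q\rho\, dA$, a strictly convex Boltzmann entropy $\int \rho \log \rho\, dA$, and a quadratic interaction $-c\iint\log|\zeta-\eta|\rho(\zeta)\rho(\eta)\, dA\, dA$. Recall that the logarithmic energy is a positive definite form on compactly supported probability measures of finite energy, a classical fact of potential theory. Consequently, for $c > 0$ the functional $F_c$ is strictly convex and admits a unique minimiser, which by the Euler-Lagrange calculation above must satisfy \eqref{SPeq} and hence \eqref{MFE}. For $c < 0$ the interaction term turns strictly concave; under mild growth assumptions on $Q$, the confinement plus entropy still bound $F_c$ from above, so $\rho_c$ exists as the unique maximiser solving the same Euler-Lagrange equation.

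The main obstacle I anticipate is the rigorous identification $\rho_c = \lim_{N\to\infty}\tfrac{1}{N} R_{N,1}$. This requires a Laplace-type asymptotic analysis of $Z_N$ in the scaling \eqref{beta n 2c}: the entropy term in \eqref{free} sits at the same order as the energy precisely because $\beta = 2c/N$, and has no counterpart in the more standard scaling \eqref{m scale1} analysed in \cite{chafai2016concentration}. Implementing this as a large-deviations principle (or a $\Gamma$-convergence statement) at this precise temperature scaling---particularly in the $c < 0$ regime where strict convexity of the rate functional is lost and concentration must be argued by other means---is where the genuine work lies.
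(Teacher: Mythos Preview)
Your derivation of the mean field equation via the Euler--Lagrange condition \eqref{SPeq} followed by applying $\Delta=\pa\bp$ is exactly the route the paper takes, both in the discussion immediately after the theorem and in Section~\ref{Saddlepoint}. You also correctly isolate the principal gap---the rigorous identification of $\rho_c$ with $\lim_{N\to\infty}\tfrac1N R_{N,1}$---and the paper is explicit that it ``currently do[es] not have a complete proof for this statement'': it relies either on the heuristic saddle-point computation of Section~\ref{Saddlepoint} or on Theorem~\ref{Ward} combined with the unproven factorisation hypothesis \eqref{fact}.

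The extremality arguments differ. For $c>0$ you invoke strict convexity of $F_c$ via positive-definiteness of the logarithmic energy together with the convex Boltzmann entropy; this is the standard potential-theoretic route and is cleaner than the paper's heuristic, which instead computes the second functional derivative \eqref{SP-max-c}, regularises at $\xi\approx\zeta$, and simply discards the delta-function contribution from the entropy. Your argument genuinely buys more here. For $c<0$, however, your reasoning has a gap: the sum of a strictly convex entropy and a strictly concave interaction is in general neither convex nor concave, so boundedness of $F_c$ from above does not by itself yield a \emph{unique} maximiser, nor even that a critical point is a local maximum. The paper's second-variation heuristic is no more convincing in this regime (it drops precisely the entropy term that obstructs concavity), and indeed the paper defers the rigorous $c<0$ case to \cite{caglioti1992special,caglioti1995special}, where it is proved only for the quadratic-plus-linear potential \eqref{Qvortex} by arguments specific to that setting.
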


We wish to emphasize that we currently do not have a complete proof for this statement. However, 
if the factorisation or mean field property of the $2$-point correlation function \eqref{R2conn} holds, in the sense that for any continuous, bounded function $f(\zeta,\zeta')$ on $\C^2$, 
\begin{align}
\begin{split}
\label{fact}
&\lim_{N \rightarrow \infty} \frac{1}{N^2}\int_{\C^2} f(\zeta,\zeta') R_{N,2}(\zeta,\zeta') dA(\zeta) dA(\zeta') 
\\
=&  
\int_{\C^2} f(\zeta,\zeta')\rho_c(\zeta) \rho_c(\zeta') dA(\zeta) dA(\zeta'),
\end{split}
\end{align} 
the mean field equation \eqref{MFE} follows from \eqref{LOOP} in Theorem \ref{Ward}.
Namely, imposing the scaling \eqref{beta n 2c} on \eqref{LOOP} and normalising the 1-point function $R_{N,1}(\zeta)$ by $ \frac1N $, the anti-holomorphic
derivative $\bp_{\zeta}$  of the limit of \eqref{LOOP} directly leads to \eqref{free}, when neglecting the contribution from the limit of $B_N$ in the sense of \eqref{fact}. For the minimising (maximising) property we only have heuristic arguments.

\begin{remark}
For the choice of potential 
\begin{equation}
\label{Qvortex}
Q(z) \equiv Q_N(z):=\lambda|z|^2+\frac{1}{2N} \left( \eta \bar{z} + \bar{\eta} z  \right),  \quad\text{with fixed} \quad  \lambda>0,\ \eta \in \C, 
\end{equation}
the joint distribution \eqref{2D Cou m} can be identified with the system of stationary states of $N$ vortices in the plane, cf \cite{caglioti1992special}. In \cite[Theorem 6.1]{caglioti1992special} in the limit \eqref{beta n 2c} (using different conventions for our constant $c$) the free energy functional of vortices \eqref{free} was rigorously derived for potential \eqref{Qvortex}.  
The mean field equation \eqref{MFE} for this potential was proven, including the
existence and extremising properties of its solution. The authors also showed  convergence towards the Dirac measure in the limit $c\downarrow-2$, see \cite{caglioti1995special}.
\end{remark}

We now compare the above free energy \eqref{free} and resulting mean field equation \eqref{MFE} to the standard large-$N$ scaling limit \eqref{m scale1}, which is well understood. Here, 
only the first line in \eqref{free} will contribute in this limit, leading to the weighted-logarithmic energy functional (see \cite{ST97}) 
\begin{equation} \label{Frost Hamil}
F[\rho]=  \int_{\C} Q(\zeta) \rho(\zeta) dA(\zeta)-\int_{\C^2} \log|\zeta-\eta| \rho(\zeta) \rho(\eta) dA(\zeta) dA(\eta)\ .
\end{equation} 
Indeed, it was shown by Hedenmaln and Makarov that under some regularity and growth assumptions on $Q$, the one particle distribution $\rho$ weakly converges toward the equilibrium measure minimising $F[\rho]$, see \cite{HM13}. 
Moreover, by standard logarithmic potential theory (see \cite{ST97}), 
\begin{equation} \label{Laplacian growth}
0= \Delta Q(\zeta) -\rho(\zeta) 
\end{equation}
is valid on the limiting support of the measure $S$ which is called the \textit{droplet}. 
For Gaussian potential $Q(z)=|z|^2$ for example, this gives the circular law, with a constant density  on the unit disc. Note that \eqref{Laplacian growth} also can be obtained from \eqref{Frost Hamil} by requiring a saddle point condition as  in \eqref{SPeq}.
We emphasize that the standard choice of scaling \eqref{m scale1} makes the droplet and density $\rho$ independent of the inverse temperature $\beta$. 

We return to the discussion of the mean field equation \eqref{MFE}. Defining $\psi_c := \log \rho_c$, it is rewritten as follows 
\begin{equation} \label{quasi Liouville}
c \, e^{\psi_c(\zeta)}  = \Delta Q(\zeta)+\Delta \psi_c(\zeta) ,
\end{equation}
which is a differential equation of \textit{generalised Liouville type}.
In case that $\Delta Q \equiv 0$ would hold, the equation \eqref{quasi Liouville} reduces to the standard Liouville equation whose explicit solutions are well-known, see e.g., \cite{crowdy1997general}. However, also in view of the result \eqref{Laplacian growth} in the standard scaling limit, we cannot assume that $\Delta Q$ is small or even negligible in any sense. For that reason we have been unable to provide an explicit solution for \eqref{MFE}, or equivalently \eqref{quasi Liouville}, even in the Gaussian case.  
We are unaware of a deeper relation between Liouville's equation and Dyson's Brownian motion in general in 2D. However, let us mention \cite{david2015renormalizability} where methods from Gaussian multiplicative chaos were utilized in the renormalisation of Liouville quantum gravity.

Let us discuss now several special cases.
For the choice of a radially symmetric potential we can provide the asymptotic behaviour of the limiting density for large $r=|z|$.
In this case we can explicitly check the interpolating property of the solution to \eqref{MFE} in the limits $c\to0$ and $c\gg1$, as we will further exemplify for monic so-called Freud potentials, that are a special case of Mittag-Leffler potentials named in \cite{ameur2018random}. In addition, we will present two examples for a numerical solution of \eqref{MFE}, for a Gaussian and quartic monic potential.

\ms
\noindent $\rhd$ \textbf{Radially symmetric potentials.} 
Suppose that the external potential is radially symmetric, i.e., there exists a function $f: \R_+ \rightarrow \R$ satisfying
\begin{equation}
f(r) = f(\sqrt{x^2+y^2}) := Q(z), \quad \text{with}\quad  z=x+iy.
\end{equation}
Let us denote by
\begin{equation} \label{radial exp gen}
g_c(r)= g_c(\sqrt{x^2+y^2}):= \frac{1}{\pi}\,\rho_c(z)
\end{equation}
the radial part of crossover density $\rho_c$. Here, the factor $1/\pi$ comes from the fact that $\rho_c(z)$ is a density function with respect to the area measure. By definition, we have 
\begin{equation} \label{int rden gen}
\int_{0}^{\infty}  r \, g_c(r) dr=\frac{1}{2\pi} 
\end{equation}
for the normalisation.
Notice that the 2D Laplace operator $\Delta$ acts on the radial density as 
\begin{equation} \label{rad Lap}
\Delta =\frac{1}{4}\left(\pa_x^2+\pa_y^2\right)= \frac{1}{4} \left( \pa_r^2+\frac{1}{r} \pa_r  \right).
\end{equation}
Combining \eqref{MFE} and \eqref{rad Lap}, we obtain following ordinary differential equation for the radial crossover density:
\begin{equation} \label{r ODE gen}
4\pi c \, g_c(r)=f''(r)+\frac{1}{r} \, f'(r) + \frac{g''_c(r)g_c(r)-\left(g'_c(r) \right)^2}{g_c(r)^2}+\frac{1}{r}\,\frac{ g'_c(r)}{g_c(r)}\ ,
\end{equation}
where we have put the density on the left-hand side. 
The asymptotic behaviour of $g_c$ for large radii,
\begin{equation}
\label{large-r}
g_c(r)=r^{2c}e^{-f(r)+o(\log r)}, \quad \text{as} \quad r\rightarrow \infty\ ,
\end{equation}
can be easily seen. Multiplying \eqref{r ODE gen} by $r$ and integrating it using the normalisation \eqref{int rden gen}, we obtain
\begin{equation}
\label{ODE-int}
2c=\left[rf'(r)+r\left(\log g_c(r)\right)'\right]_0^\infty\ ,
\end{equation}
from which \eqref{large-r} follows. In fact the function $r^{2c}e^{-f(r)}$ solves the ``homogeneous" equation \eqref{r ODE gen}, where the left-hand side is set to zero. However, due to the non-linearity of the equation, the solution is not given by this ``homogeneous" solution plus a special solution.

\ms
\noindent $\rhd$ \textbf{Examples.}
A particular realisation of a rotationally invariant potential is given by  the monomials, so-called \textit{Freud} or \textit{Mittag-Leffler potentials} (cf. \cite{ameur2018random})
\begin{equation}
Q(z)=  \frac12  |z|^{2\alpha}, \quad \alpha \ge 1.
\end{equation}
In this case we obtain for $r$ times \eqref{r ODE gen} 
\begin{equation} 
\label{radial-Freud}
4\pi c \, r \, g_c=2\alpha^2 r^{2\alpha-1} +\left(r \, (\log g_c)'\right)'.
\end{equation}
Note that for these homogeneous potentials, the ensemble \eqref{2D Cou m} with scales $m=\beta N/2$ and $m=1$ can be related by simple rescaling of the point particles. Therefore, by \eqref{Laplacian growth}, it is easy to calculate the radial density in the limit when $c \rightarrow \infty$. As a result, the extremal cases of the solutions of \eqref{MFE} including their normalisations are given by 
\begin{equation} 
\label{c-limits}
g_c(r)\sim \begin{cases}
\displaystyle \frac{1}{\pi\, 2^{1/\alpha}\, \Gamma(1+1/\alpha)} \, \exp\left(-\frac12\, r^{2\alpha}\right) &\text{as} \quad c \rightarrow 0;
\vspace{0.5em}
\\
\displaystyle \frac{\alpha^2}{2\pi c} \, r^{2\alpha-2}\, \mathbbm{1}_{[0,(2c/\alpha)^{1/2\alpha}]}&\text{as} \quad c \rightarrow \infty.
\end{cases}
\end{equation}
These are of course just special cases for $g_c(r)\sim e^{-f(r)}$ for $c\to0$ and $g_c(r)\sim \Delta f(r) \mathbbm{1}_S$ for $c\to\infty$ on the corresponding droplet $S$.

We present now two examples for numerical solutions of \eqref{radial-Freud} at specific values of $c$. In Figure~\ref{Ginibre fig} below the case $\alpha=1$ of a Gaussian and in Figure~\ref{Freud fig} of a quartic potential with $\alpha=2$ are shown.
We obtain the numerical solutions not only for positive $c$ but also for negative $c$. The conjecture is that as $c$ goes to its critical (negative) value $-2$, the ensemble collapses at the origin, i.e., the one particle density converges towards a Dirac delta. While for $\alpha=1$ this is known \cite{caglioti1992special} we observe that a similar behaviour occurs for $\alpha=2$.

%\vfill

\newpage

\begin{figure}
	\begin{center}
		\begin{subfigure}{0.495\textwidth}
			\includegraphics[width=2.5in]{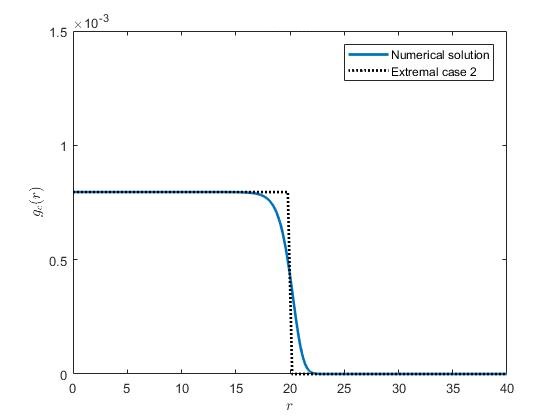}
			\caption{$c=200$}
		\end{subfigure}
		\hfill
		\begin{subfigure}{0.495\textwidth}
			\includegraphics[width=2.5in]{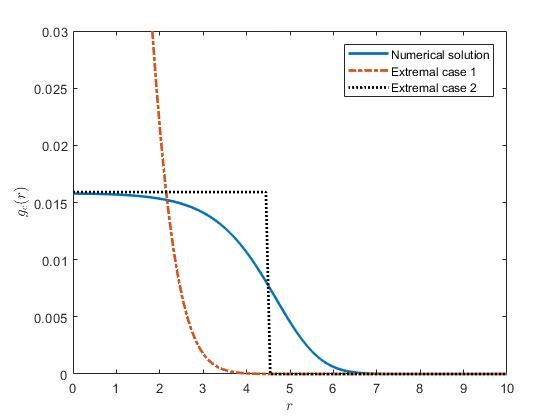}
			\caption{$c=10$}
		\end{subfigure}
		\vfill 
		\begin{subfigure}{0.495\textwidth}
			\includegraphics[width=2.5in]{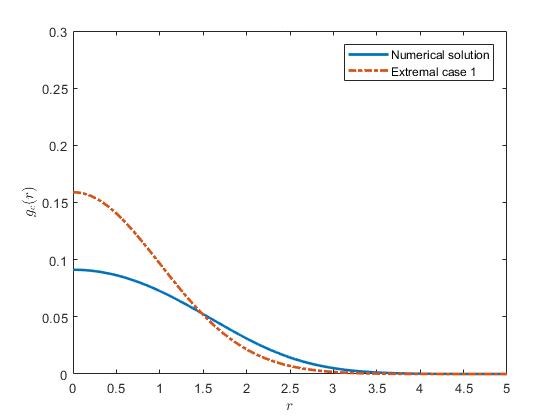}
			\caption{$c=1$}
		\end{subfigure}
		\hfill
		\begin{subfigure}{0.495\textwidth}
			\includegraphics[width=2.5in]{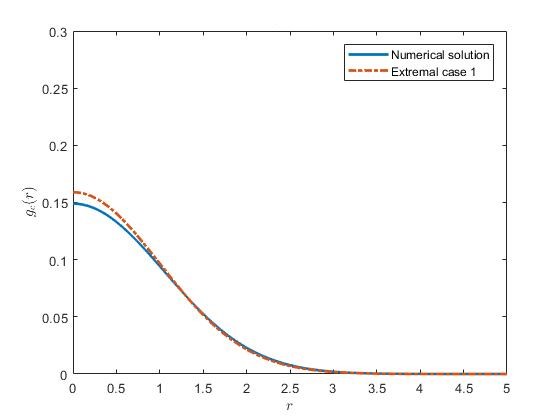}
			\caption{$c=0.1$}
		\end{subfigure}
		\vfill
		\begin{subfigure}{0.495\textwidth}
			\includegraphics[width=2.5in]{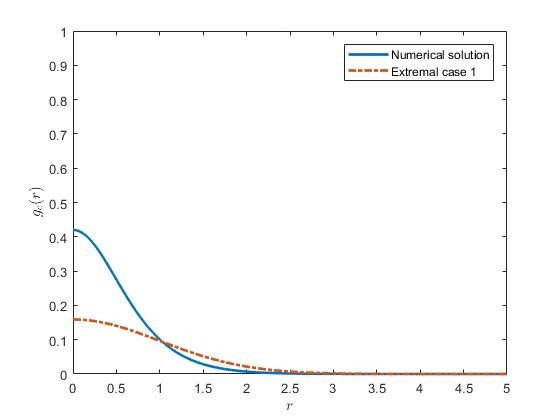}
			\caption{$c=-1$}
		\end{subfigure}
		\hfill
		\begin{subfigure}{0.495\textwidth}
			\includegraphics[width=2.5in]{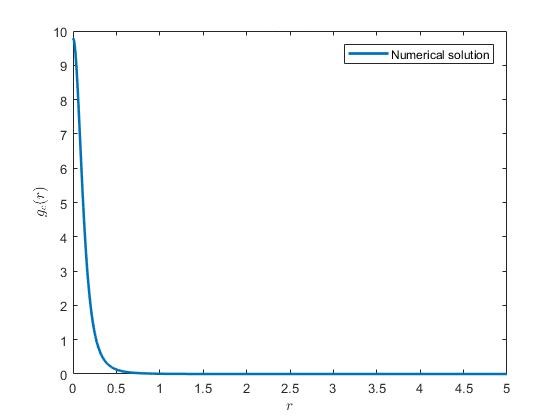}
			\caption{$c=-1.9$}
		\end{subfigure}
		\caption{The numerical solution for our interpolating radial density $g_c(r)$ (full blue line) of our mean field equation \eqref{radial-Freud} is shown for the Gaussian potential		
		$Q(z)=|z|^2/2$, with parameter $c$ decreasing from $c=200$ to $c=-1.9$. What is also shown is the limiting circular law from \eqref{c-limits} for large $c$ (two top plots, dotted lines) as well as the limiting Gaussian density (dashed orange lines) for $c=0$. Note that the normalisation is with respect to the radial measure $2\pi\,r$ in 2D, see \eqref{int rden gen}. For that reason the area under the curves does not agree.}  
		\label{Ginibre fig}
		\vfill
	\end{center}
\end{figure}

\clearpage
%\newpage

\begin{figure}
	\begin{center}
		\begin{subfigure}{0.495\textwidth}
			\includegraphics[width=2.5in]{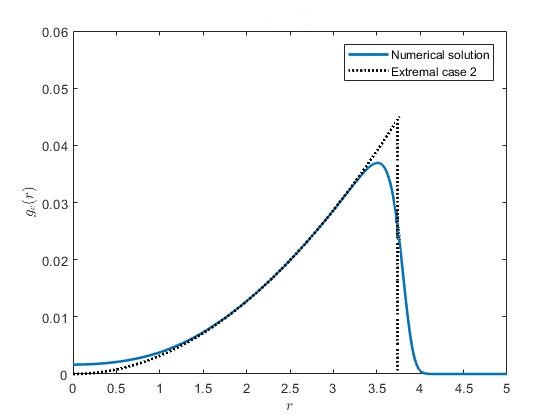}
			\caption{$c=200$}
		\end{subfigure}
		\hfill
		\begin{subfigure}{0.495\textwidth}
			\includegraphics[width=2.5in]{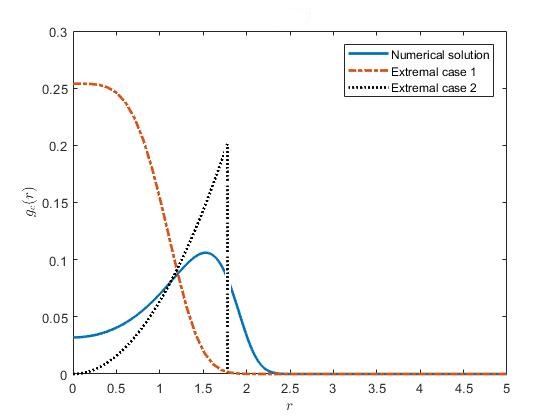}
			\caption{$c=10$}
		\end{subfigure}
		\vfill 
		\begin{subfigure}{0.495\textwidth}
			\includegraphics[width=2.5in]{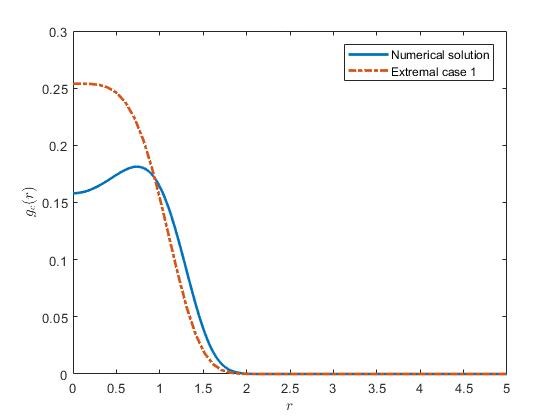}
			\caption{$c=1$}
		\end{subfigure}
		\hfill
		\begin{subfigure}{0.495\textwidth}
			\includegraphics[width=2.5in]{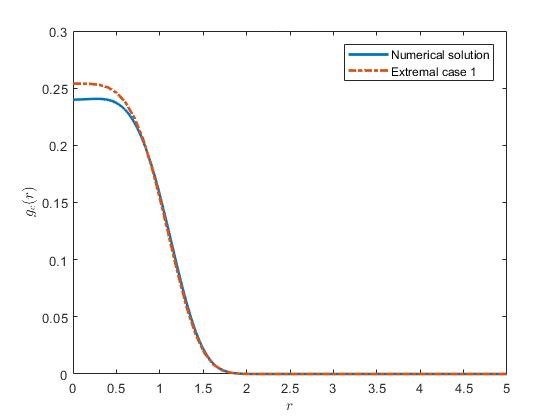}
			\caption{$c=0.1$}
		\end{subfigure}
		\vfill
		\begin{subfigure}{0.495\textwidth}
			\includegraphics[width=2.5in]{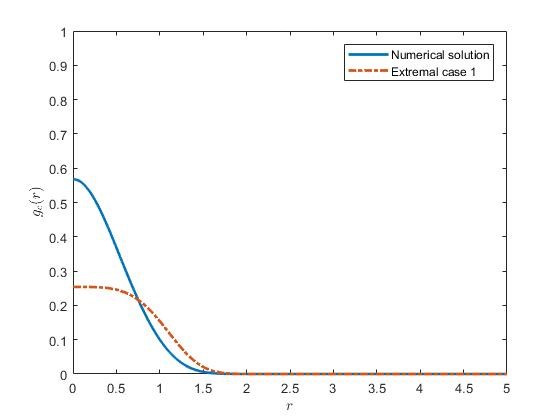}
			\caption{$c=-1$}
		\end{subfigure}
		\hfill
		\begin{subfigure}{0.495\textwidth}
			\includegraphics[width=2.5in]{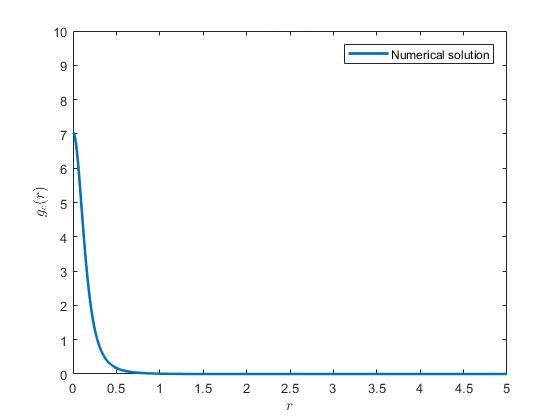}
			\caption{$c=-1.9$}
		\end{subfigure}
		\caption{The same plots as in Figure \ref{Ginibre fig} are shown for a quartic potential $Q(z)=|z|^4/2$. Here, for large $c$ the limiting circular law is replaced by a parabola, and for $c=0$ the Gaussian by $e^{-r^4/2}$, see \eqref{c-limits}. The approach to the conjectured Dirac delta looks similar to the previous figure at $c=-1.9$.}\label{Freud fig}
			\vfill
	\end{center}
\end{figure}

\clearpage
%\newpage

The remainder of this article is organised as follows. In Section~\ref{Stochastic}, we will approach our Coulomb gas in 2D and also in 1D as a diffusive process. Here a first version of Theorem~\ref{Ward} will be proven, including its 1D counterpart.  Section~\ref{WARD Sec} is devoted to the study of the Ward identity and the final version of Theorem~\ref{Ward}. Its 1D version follows in parallel, and the corresponding free energies result when assuming factorisation. In Section~\ref{Saddlepoint}, we will introduce the saddle point method in a heuristic way. Here, the two different scalings \eqref{beta n 2c} and \eqref{m scale1} leading to the respective free energies \eqref{free} and \eqref{Frost Hamil} will become evident. This leads to the mean field equations given above. In addition, in the 1D case we derive a mean field equation for the resolvent for a general potential, slightly generalising \cite{allez2012invariant}. We then give an example for a Gaussian plus logarithmic potential
\begin{equation}
V_a(x)= \frac12 x^2-a\log|x|\ , \quad a>-1.
\end{equation}
The associated resolvent equation can be solved, following \cite{allez2012invariant2} closely. The resulting interpolating density $\rho_c(\lambda)$ is then given by Kummer's (confluent) hypergeometric function, see \eqref{Gausslog}, with special cases shown in Figures~\ref{Gausslog1}, \ref{Gausslog0} and \ref{Gausslog2}.

\begin{figure}[h]
	\begin{center}
		\includegraphics[width=4.3in]{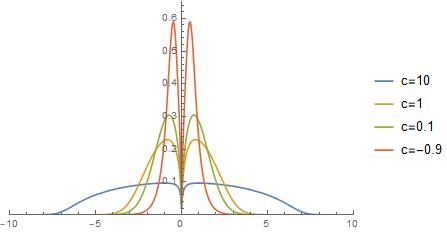} 	\caption{The plot displays the interpolating densities $\rho_c(\lambda)$ against $\lambda\in\R$. It is given in \eqref{Gausslog} for $a=\frac12$ with parameter $c$ decreasing from $c=10$ to $c=-0.9$. The figure indicates the logarithmic repulsion at the origin. For large $c$ the density converges to the semi-circle and this repulsion only becomes visible on a local scale.
		Here and below in Fig. \ref{Gausslog2} the normalisation constant of the density is determined numerically.
		The convergence to the conjectured Dirac delta at $c=-1$ is also visible.} \label{Gausslog1}
	\end{center}
\end{figure}	
	\begin{figure}
	\begin{center}
		\includegraphics[width=4.3in]{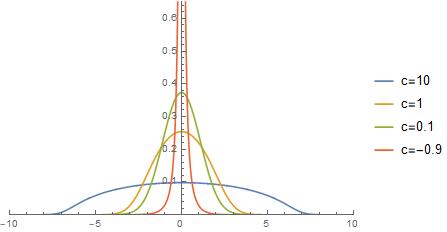}  \caption{The same plot as in Figure~\ref{Gausslog1} for $a=0$, reproducing the findings of \cite{allez2012invariant}.
		It shows the progressive transition from Wigner's semi-circle ($c=\infty$) through the Gaussian distribution ($c=0$) to the conjectured Dirac delta ($c=-1$). 
		}  \label{Gausslog0}
	\end{center}
	\end{figure}	
	\begin{figure}
	\begin{center}
		\includegraphics[width=4.3in]{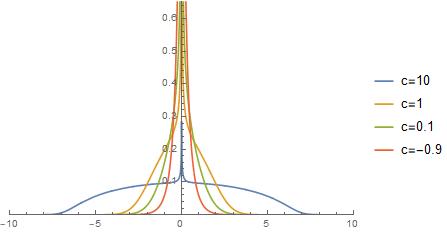} 	\caption{The same plot as in Fig. \ref{Gausslog1} for $a=-\frac12$, representing a logarithmic attraction towards the origin. } \label{Gausslog2}
	\end{center}
\end{figure}

\vfill

%\clearpage

%%%%%%%%%%%%%%%%%%%%%%%%%%%%%%%%%%%%%%%%%%%%%%%%%%%%%%%%%%%%%%%
\section{Stochastic Dynamics for Coulomb Gases in 2D and 1D} \label{Stochastic}

\subsection{Dynamics for 2D Coulomb gases} \label{ITO Sec}
We begin by setting up the framework for the Dyson type dynamics whose invariant law is given by Gibbs measure \eqref{2D Cou m}.
For a given external potential $Q$, let us consider the 2D diffusion process
\begin{equation}
\zeta_j(t)=x_j(t)+iy_j(t), \quad x_j, y_j \in \R, \quad j=1,\cdots, N
\end{equation} 
where the 1D diffusion processes $x_j(t)$, $y_j(t)$ are given by 
\begin{align}
dx_j&=\sqrt{2} \, dB_j-m\pa_{x_j} Q \big( x_j+iy_j \big) dt+\beta \sum_{k: k\not= j}^N \frac{x_j-x_k}{(x_j-x_k)^2+(y_j-y_k)^2}dt, \label{diff Qx}
\\
dy_j&=\sqrt{2} \, d\tilde{B}_j-m\pa_{y_j} Q \big( x_j+iy_j \big)dt+\beta \sum_{k: k\not= j}^N \frac{y_j-y_k}{(x_j-x_k)^2+(y_j-y_k)^2}dt.  \label{diff Qy}
\end{align}
Here $B_j$ and  $\tilde{B}_j$ are independent 1D Brownian motions. 
Note that the system \eqref{diff Qx} and \eqref{diff Qy} can be rewritten as
\begin{equation} \label{diff Qz}
d\zeta_j(t)=\sqrt{2} \, dz_j(t)-2 \, m\bp Q(\zeta_j(t)) dt +\beta \sum_{k: k\not= j}^N \frac{1}{\overline{\zeta_j}(t)-\overline{\zeta_k}(t)}dt,
\end{equation}
where $z_j(t)$ is a 2D standard Brownian motion. For each time $t$, let $p_N(t,\bfs{\zeta})$ be the joint probability density function with respect to the area measure. Throughout this subsection we assume that the potential $Q$ is properly chosen so that the diffusive system of particles \eqref{diff Qz} is well-defined and admits a unique invariant measure. 
Under these assumptions, by virtue of standard It\^{o}'s calculus, one can easily show that the stationary density function $p_N(\bfs{\zeta}):=\lim_{t\rightarrow \infty} p_N(t,\bfs{\zeta})$ is given by
	\begin{align} \label{eq bet Q0}
	p_N(\bfs{\zeta})&=\frac{1}{Z_N}\prod_{j,k:j> k}^N |\zeta_j-\zeta_k|^\beta e^{-m\sum_{j=1}^N Q(\zeta_{j}) },
	\end{align}
	where $Z_N$ is the normalization constant. For being self-consistent, we present a sketch of the proof.

\begin{proof}
	Let $F: \R^{2N} \simeq \C^N \rightarrow \R$ be a given smooth function. From now on, we introduce a subscript for the corresponding differential operator acting on $F$, e.g., $\pa_{x_j} F= F_{x_j},  \pa_{x_j}^2 F=F_{x_jx_j}$. We write  $\E_t$ for the expectation with respect to $p_N(t,\bfs{\zeta})$, i.e., 
	\begin{equation*} 
	\E_t \, F(\bfs{\zeta})= \int F(\bfs{\zeta}) \, p_N(t,\bfs{\zeta}) \prod_{j=1}^N dA(\zeta_j).
	\end{equation*}  	
By It\^{o}'s formula, we have
\begin{align*} 
	\begin{split}
	dF(\bfs{\zeta})
	=&\sum_{j=1}^{N} \left(F_{x_j} dx_j+F_{y_j}dy_j\right)+\sum_{j=1}^N \left(F_{x_jx_j}+F_{y_jy_j} \right) dt.
	\end{split}
	\end{align*}
By \eqref{diff Qx}, \eqref{diff Qy}, we have 
	\begin{align*} 
	\begin{split}
	dF(\bfs{\zeta})
	=& \sum_{j=1}^{N} \sqrt{2} \left(  F_{x_j}dB_j+F_{y_j}d\tilde{B}_j \right)-m\left(\pa_{x_j} Q \cdot  F_{x_j}+\pa_{y_j} Q \cdot  F_{y_j} \right)dt 
	\\
	&+\beta \sum_{j,k:j\neq k}^N \left[ \frac{x_j-x_k}{(x_j-x_k)^2+(y_j-y_k)^2} F_{x_j}+ \frac{y_j-y_k}{(x_j-x_k)^2+(y_j-y_k)^2} F_{y_j} \right] dt
	\\
	&+\sum_{j=1}^N \left(F_{x_jx_j}+F_{y_jy_j} \right) dt.
	\end{split}
	\end{align*}
	Taking expectation on both sides, we obtain 
	$\pa_t \E_t \, F(\bfs{\zeta}) = \E_t \, D^\beta F( \bfs{\zeta}), $	where the operator $D^\beta$ acts on $F$ as 
	\begin{align*}
	D^\beta :=& \sum_{j=1}^{N} \left(\pa_{x_j}^2 +\pa_{y_j}^2- m\pa_{x_j} Q \cdot \pa_{x_j} -m\pa_{y_j} Q  \cdot \pa_{y_j}\right)
	\\
	&+\beta \sum_{j,k:j\neq k}^N  \frac{(x_j-x_k)\cdot \pa_{x_j}+(y_j-y_k)\cdot \pa_{y_j}}{(x_j-x_k)^2+(y_j-y_k)^2}. \nonumber
	\end{align*} 
	Using integration by part, we obtain that the stationary density function $p_N$ satisfies following partial differential equation:
	\begin{align*}
	0 =&\sum_{j=1}^{N} \frac{\left(\pa_{x_j}^2+\pa_{y_j}^2\right)p_N}{p_N} 
	+m\sum_{j=1}^{N} \left[  \left( \pa_{x_j}^2  Q + \pa_{y_j}^2 Q  \right)  + \left( \pa_{x_j} Q \, \frac{\pa_{x_j} p_N}{p_N} + \pa_{y_j} Q \, \frac{\pa_{y_j} p_N}{p_N} \right) \right] 
	\\
	&-\beta   \sum_{j,k:j\neq k}^N \left[\frac{x_j-x_k}{(x_j-x_k)^2+(y_j-y_k)^2} \frac{\pa_{x_j} p_N}{p_N}+ \frac{y_j-y_k}{(x_j-x_k)^2+(y_j-y_k)^2} \frac{\pa_{y_j} p_N}{p_N}  \right]. \nonumber
	\end{align*}
 Given our assumptions, all we need to check is that \eqref{eq bet Q0} solves this partial differential equation, which follows by direct calculations. $\Box$\\
\end{proof}

\subsubsection{Proof of Theorem \ref{Ward}}
Now we prove \eqref{LOOP}  by means of It\^{o}'s stochastic calculus. 
Let us consider the system of diffusion processes $\bfs{\zeta}(t)$ given by \eqref{diff Qz}, under the conditions on $Q$ that this is well-posed. Let $f$ be a (real-valued) smooth function defined on the complex plane. We define the time dependent normalised one-point counting function in the plane
\begin{equation*}
\rho_t(z):= \frac{1}{N} \sum_{j=1}^{N} \delta\big(z-\zeta_{j}(t) \big),
\end{equation*}
such that  
$$
\int f(z) \rho_t(z)dA(z) = \frac{1}{N}\sum_{j=1}^{N} f\big(\zeta_{j}(t)\big).
$$
By a complex-variable version of It\^{o}'s lemma, 
we obtain 
\begin{align*}
&d \int f(z) \rho_t(z)dA(z) \\
=& \frac{1}{N} \sum_{j=1}^{N}\Big[ \pa_{\zeta_j} f(\zeta_{j})d\zeta_{j}+\bp_{\zeta_{j}} f(\zeta_{j}) d\bar{\zeta}_j +4 \pa_{\zeta_j} \bp_{\zeta_j} f(\zeta_j) dt\Big] 
\\
= &  \frac{1}{N} \sum_{j=1}^{N}\Big[ \pa_{\zeta_{j}} f(\zeta_{j})d\zeta_{j}+\bp_{\zeta_{j}} f(\zeta_{j}) d\bar{\zeta}_j\Big] + 4 \left( \int (\pa \bp f) (z) \rho_t(z)dA(z) \right)dt.
\end{align*} 
Substituting $d\zeta_j$ by \eqref{diff Qz}, we have 
\begin{align*}
\begin{split} 
&d \int f(z) \rho_t(dz) 
- \frac{\sqrt{2}}{N} \sum_{j=1}^{N} \left( \pa f dz_j +\bp f d\bar{z}_j \right) 
\\
=&-\left[2 \int \left( \pa f m\bp Q -\pa \bp f  \right)\rho_t dA -N\beta \int_{z\neq w} \frac{\pa f(z)}{\bar{z}-\bar{w}} \rho_t (z) \rho_t(w)dA(z)dA(w) \right] dt
\\
&-\left[2 \int \left( \bp f m\pa Q -\pa \bp f  \right)\rho_t dA  -N\beta\int_{z\neq w} \frac{\bp f(z)}{z-w} \rho_t (z) \rho_t(w)dA(z)dA(w)\right]dt. 
\end{split} 
\end{align*} 
By taking the expectation on both sides of the above equation, letting $t\rightarrow \infty$ and using the definition \eqref{k-pt cor} we obtain 
\begin{align*}
0 =&-2 \int \left( \pa f m\bp Q -\pa \bp f  \right)R_{N,1}dA  +\beta \int \frac{\pa f(z)}{\bar{z}-\bar{w}} R_{N,2}(z,w)dA(z)dA(w) 
\\
&-2 \int \left( \bp f m\pa Q -\pa \bp f  \right)R_{N,1}dA  +\beta \int \frac{\bp f(z)}{z-w}  R_{N,2}(z,w)dA(z)dA(w).
\end{align*}
Note that here the condition $z\neq w$ is dropped due to the definition of $R_{N,2}$.
Since $f$ is a real-valued function, we have 
$$
\frac{\beta}{2} \int \frac{\bp f(z)}{z-w}  R_{N,2}(z,w)dA(z)dA(w)= \int \left( \bp f m\pa Q -\pa \bp f  \right)R_{N,1}dA.
$$
Moreover, since $\bp f$ is arbitrary, after integration by parts in the last term on the right-hand side, we conclude that  
\begin{equation*}
\frac{\beta}{2} \int \frac{R_{N,2}(z,w)  }{z-w} dA(w)= m\pa Q(z) R_{N,1}(z) + \pa R_{N,1}(z),
\end{equation*}
which completes the proof. $\Box$\\

\begin{example}
Recall that the \textit{elliptic Ginibre ensemble} is a one parameter family of 2D Coulomb gases where the external potential is given by 
	$$
	Q(z)=\frac{1}{1-\tau^2} \left(  |z|^2-\frac{\tau}{2}(z^2+\bar{z}^2 )  \right), \quad \tau \in (0,1).
	$$  	 
	It is well-known that the elliptic Ginibre ensemble interpolates between the Ginibre ensemble ($\tau=0$) and the GUE ($\tau \uparrow 1$) for $\beta=2$, see \cite{fyodorov1997almost}.	
	We remark that \eqref{diff Qz} gives the dynamical interpretation of this interpolation, valid for all $\beta>0$.
	More precisely, note that the system of diffusion processes for the elliptic Ginibre ensemble is given as
	\begin{align*}
	dx_j&=\sqrt{2} \, dB_j-m\frac{2x_j}{1+\tau} dt+\beta \sum_{k: k\not= j}^N \frac{x_j-x_k}{(x_j-x_k)^2+(y_j-y_k)^2}dt; 
	\\
	dy_j&=\sqrt{2} \, d\tilde{B}_j-m\frac{2y_j}{1-\tau}dt+\beta \sum_{k: k\not= j}^N \frac{y_j-y_k}{(x_j-x_k)^2+(y_j-y_k)^2}dt.  
	\end{align*}
	Observe that for every $j$, as $\tau \uparrow 1$, we have $y_j \sim \exp \left[ -\frac{2m}{1-\tau}t \right]$, which implies that the ensemble lies on the real line. Moreover, the diffusion process on the real line is given as 
	$$
	dx_j=\sqrt{2}dB_j-m\,x_jdt+\beta \sum_{k: k\not= j}^N \frac{1}{x_j-x_k}dt,
	$$
	which coincides with the one for $\beta$-GUE dynamics, see \eqref{lamba diff V} below.
\end{example}	

%%%%%%%%%%%%%%%%%%%%%%%%%%%%%%%%%%%%%%%%%%%%%%%%%%%%%
\subsection{Dynamics for 1D Coulomb gases}\label{Ito Sec1D}

In this subsection, we study the crossover regime for Coulomb gases confined on the real line. 
Note that our approach differs from \cite{MR3078021}.
For a given potential $V: \R \rightarrow \R$, we consider a system of particles labelled by $\bfs{\lambda}=(\lambda_1,\cdots,\lambda_{N}) \in \R^N$, having the joint probability density
\begin{align}\label{jpdf 1D} 
d\mathbf{P}_N(\bfs{\lambda})=  p_N(\bfs{\lambda})
\prod_{j=1}^{N}d\lambda_j\ , \quad
p_N(\bfs{\lambda})=\frac{1}{ Z_N}   \prod_{j>k=1}^N |\lambda_j-\lambda_k|^{\beta} e^{- m\sum_{j=1}^N V(\lambda_j)}\ ,
\end{align}
with normalising constant $Z_N$.
The corresponding $k$-point correlation functions are defined as in \eqref{k-pt cor}, with the corresponding counting function on the real line.

In analogy to \eqref{diff Qz}, let us consider the following dynamical system on the real line: 
\begin{equation} \label{lamba diff V}
d\lambda_j(t)=\sqrt{2} \, dB_j(t)-mV' \big( \lambda_j(t) \big) dt +\beta \sum_{k: k\not= j}^N \frac{1}{\lambda_j(t)-\lambda_k(t)}dt.
\end{equation}
The well-posedness of the above system with different assumptions on $V$ and $\beta$ has been studied by several authors, see e.g., \cite{cepa1997diffusing, MR3078021, rogers1993interacting, graczyk2014strong}.
For every time $t$, let  $p_N(t,\lambda)$ be the corresponding joint probability density function for the system \eqref{lamba diff V}. 
Then, as in Subsection~\ref{ITO Sec}, one can prove that the limiting stationary density function $p_N(\lambda):=\lim_{t\to\infty}p_N(t,\lambda)$ is given by 
\eqref{jpdf 1D}.

The following relation between the 1- and 2-point function corresponding to Theorem \ref{Ward} holds:
\begin{proposition}\label{Loop-1DProp}
Given the Gibbs measure \eqref{jpdf 1D}, with a smooth $C^2$ potential $V$ that satisfies $V(\lambda)>\log \lambda$ for $\lambda\to\infty$, the following relation holds for every finite-$N$:
\begin{equation} \label{Loop 1D}
\beta
\int_\R \frac{R_{N,2}(\lambda,\lambda')}{\lambda-\lambda'}  d\lambda'=   m\,V'(\lambda) R_{N,1}(\lambda)+R_{N,1}'(\lambda). 
\end{equation}
\end{proposition}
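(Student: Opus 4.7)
The plan is to carry out the one-dimensional analogue of the It\^{o} calculation behind Theorem~\ref{Ward}, working with the well-posed diffusion \eqref{lamba diff V} whose stationary law is \eqref{jpdf 1D}. For a smooth test function $f:\R\to\R$ of sufficient decay, I would consider the functional $\sum_{j=1}^{N} f(\lambda_j(t)) = N\int_{\R} f(\lambda)\rho_t(\lambda)\,d\lambda$, where $\rho_t$ is the empirical counting measure at time $t$, and apply the standard one-dimensional It\^{o} formula. Using that the quadratic variation of $\lambda_j$ is $2\,dt$, substituting \eqref{lamba diff V} gives
\[
d\sum_{j=1}^{N} f(\lambda_j) = \sqrt{2}\sum_j f'(\lambda_j)\,dB_j + \Bigl[\sum_j f''(\lambda_j) - m\sum_j V'(\lambda_j)f'(\lambda_j) + \beta\sum_{j\neq k}\frac{f'(\lambda_j)}{\lambda_j-\lambda_k}\Bigr]dt.
\]

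Taking expectation kills the martingale term. Passing to the stationary regime $t\to\infty$ (so that the left-hand side becomes constant in $t$) and rewriting the resulting sums as integrals against the correlation functions via \eqref{k-pt cor}, I obtain
\[
0 = \int_{\R}\bigl(f''(\lambda) - mV'(\lambda)f'(\lambda)\bigr)R_{N,1}(\lambda)\,d\lambda + \beta\int_{\R^2}\frac{f'(\lambda)}{\lambda-\lambda'}R_{N,2}(\lambda,\lambda')\,d\lambda\,d\lambda'.
\]
The singular integrand on the right is integrable for $\beta>0$ because the Vandermonde weight $|\lambda_j-\lambda_k|^\beta$ in \eqref{jpdf 1D} forces $R_{N,2}(\lambda,\lambda')$ to vanish along the diagonal fast enough to compensate. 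An integration by parts in the first term, with no boundary contributions thanks to the growth condition $V(\lambda)>\log\lambda$ ensuring rapid decay of $R_{N,1}$, converts $\int f''R_{N,1}\,d\lambda$ into $-\int f'R_{N,1}'\,d\lambda$. What remains is an identity of the form $\int f'(\lambda)\,[\,\cdots\,]\,d\lambda = 0$ that is valid for every smooth $f$. Since $f'$ can be taken to be any smooth function of zero mean, the bracket is distributionally constant in $\lambda$, and the decay of $R_{N,1}$ and $R_{N,2}$ forces the constant to vanish, yielding \eqref{Loop 1D}.

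The main technical obstacle is justifying the interchange of time derivative and expectation and the vanishing of boundary terms in the integration by parts, both of which rest on the well-posedness of \eqref{lamba diff V} and on the growth hypothesis on $V$; these are routine in the settings covered by the dynamical results of \cite{cepa1997diffusing,MR3078021,rogers1993interacting,graczyk2014strong}. A cleaner alternative, mirroring the Ward-identity proof of Theorem~\ref{Ward} in Section~\ref{WARD Sec}, is to perform an infinitesimal reparametrisation $\lambda_j\mapsto\lambda_j+\varepsilon g(\lambda_j)$ in the partition function, expand the Jacobian and the Boltzmann weight to order $\varepsilon$, and invoke the invariance of $Z_N$ under this change of variables. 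This bypasses the dynamical well-posedness issue entirely and needs only the $C^2$ smoothness and the stated tail bound on $V$ assumed in the proposition.
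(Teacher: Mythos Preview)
Your proposal is correct and follows essentially the same It\^{o}-calculus argument as the paper's proof in Section~\ref{Ito Sec1D}: apply It\^{o}'s formula to $\sum_j f(\lambda_j(t))$, take expectations in the stationary regime, rewrite the sums via the correlation functions, integrate by parts, and use arbitrariness of the test function. Your closing remark about the Ward-identity alternative also mirrors the paper's second derivation in Section~\ref{Ward1Dproof}, and your extra care about $f'$ having zero mean (hence determining the bracket only up to a constant, which decay then kills) is a detail the paper glosses over.
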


We note the difference in factor $1/2$ on the left-hand side compared to \eqref{LOOP} which is because we are in 1D now.
\begin{proof}
Let $f$ be a (real-valued) smooth function defined on the real line and set 
\begin{equation}
 \rho_t(\lambda):= \frac{1}{N} \sum_{j=1}^{N} \delta (\lambda-\lambda_{j}(t) )
\end{equation}
to be the normalised  time dependent one-point counting function on the real line. 
By It\^{o}'s formula and \eqref{lamba diff V} we obtain 
\begin{align*}
&d \int f(\lambda) \rho_t(\lambda)d\lambda- \frac{ \sqrt{2}}{N} \sum_{j=1}^{N} f'(\lambda_{j}) \, dB_j(t)
\\
=&\int_\R \Big( f''(\lambda)-mf'(\lambda) V'(\lambda) \Big) \rho_t(\lambda)d\lambda+N\beta \int_{\lambda \neq \lambda'} \frac{f'(\lambda)}{\lambda-\lambda'}\rho_t(\lambda)\rho_t(\lambda') d\lambda d\lambda'.
\end{align*}
Therefore, taking expectation values, using the definition according to \eqref{k-pt cor}, and letting $t \rightarrow \infty$, we obtain
\begin{align*}
\beta \int \frac{f'(\lambda)}{\lambda-\lambda'}R_{N,2}(\lambda,\lambda') d\lambda d\lambda'&= 
\int_\R \Big( f'(\lambda) V'(\lambda)-f''(\lambda) \Big)R_{N,1}(\lambda)d\lambda
\\
&= \int_\R  f'(\lambda) \Big( V'(\lambda) R_{N,1}(\lambda)+R_{N,1}'(\lambda) \Big) d\lambda,
\end{align*}
which leads to \eqref{Loop 1D}. $\Box$

\end{proof}

%\clearpage

%%%%%%%%%%%%%%%%%%%%%%%%%%%%%%%%%%%%%%%%%%%%%%%%%%%%%%%%%%%%%%%
\section{Ward Identities in 2D and 1D} \label{WARD Sec}

\subsection{Ward identities in 2D} \label{Ward iden sub}
In this subsection, we discuss Ward identities for 2D Coulomb gases. They have been utilized already to derive the equation for the density function 
\eqref{Laplacian growth} for 2D Coulomb gases, with standard scaling \eqref{m scale1}, see \cite{ zabrodin2006large} and \cite[Chapter 39]{akemann2011oxford}. We adapt the proof presented in \cite{AKM14,AHM15} to derive the appropriate Ward identity for the 2D Coulomb gas distributed according to \eqref{2D Cou m}. 
For an alternative proof using so-called integration by parts see also \cite{ameur2018random,bauerschmidt2016two}, and for the general form of Ward identities we refer the reader to \cite[Appendix 6]{KM13}. The proof for the Ward identities in 1D presented in the next subsection  follows along the very same lines as in this subsection and we will not give much further details there.

For a test function $\psi \in C_0^\infty(\C)$ and $\bfs{\zeta}=(\zeta_1, \dots, \zeta_N)$, let us denote
\begin{align} \label{Ward functional123}
\begin{split}
\RN{1}_N[\psi](\bfs{\zeta})&=\frac{1}{4} \sum_{j,k:j\neq k}^N \frac{\psi(\zeta_{j})-\psi(\zeta_{k})}{\zeta_{j}-\zeta_{k}},
\\  
\RN{2}_N[\psi](\bfs{\zeta})&= m\sum_{j=1}^{N} \pa Q(\zeta_j) \psi(\zeta_{j}),
\\ 
\RN{3}_N[\psi](\bfs{\zeta})&= \sum_{j=1}^N \pa \psi(\zeta_{j}),
\end{split}
\end{align}
and define Ward's (stress energy) functional $W_N^+$ as 
\begin{equation}\label{Ward functional}
W_N^+[\psi]= \beta \, \RN{1}_N[\psi]- \RN{2}_N[\psi]+\RN{3}_N[\psi].
\end{equation}
From now on, we write $\E_N$ for the expectation with respect to \eqref{2D Cou m}. We first prove the following form of Ward's identity:
\begin{lemma}\label{EW=0LA}
For the definitions \eqref{Ward functional123} and \eqref{Ward functional} the following expectation value holds:
\begin{equation}\label{EW=0}
\E_N W_N^+[\psi]=0.
\end{equation} 
\end{lemma}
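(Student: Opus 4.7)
The plan is to prove this identity by a direct integration-by-parts argument applied to the partition function integral, following the approach of \cite{AKM14}. Since $\psi \in C_0^\infty(\C)$, the product $\psi(\zeta_j)\, p_N(\bfs{\zeta})$ has compact support in the variable $\zeta_j$, so combining the Wirtinger identity $\pa_\zeta = \tfrac{1}{2}(\pa_x - i\pa_y)$ with Stokes' theorem in the plane yields
$$\int_{\C^N} \pa_{\zeta_j}\bigl[\psi(\zeta_j)\, p_N(\bfs{\zeta})\bigr] \prod_{k=1}^N dA(\zeta_k) \;=\; 0, \qquad j=1,\dots,N.$$
Summing over $j$ and applying the product rule inside the integral, the task is to identify the three terms of $W_N^+[\psi]$ coming from the two factors $\pa\psi(\zeta_j)$ and $\psi(\zeta_j)\,\pa_{\zeta_j}\log p_N(\bfs{\zeta})$.

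A direct computation from \eqref{2D Cou m}, using $\pa_\zeta \log|\zeta-w|^2 = 1/(\zeta-w)$ for $\zeta \neq w$, gives
$$\pa_{\zeta_j}\log p_N(\bfs{\zeta}) \;=\; -\,m\,\pa Q(\zeta_j) \;+\; \frac{\beta}{2}\sum_{k:\,k\neq j}^N \frac{1}{\zeta_j - \zeta_k}.$$
After substitution the $\pa\psi$-term is manifestly $\RN{3}_N[\psi]$ and the $Q$-term yields $-\RN{2}_N[\psi]$. For the remaining double sum $\tfrac{\beta}{2}\sum_{j\neq k}\psi(\zeta_j)/(\zeta_j-\zeta_k)$ the final step is to symmetrise under $j\leftrightarrow k$: using the antisymmetry of $\zeta_j - \zeta_k$,
$$\sum_{j\neq k}\frac{\psi(\zeta_j)}{\zeta_j - \zeta_k} \;=\; \frac{1}{2}\sum_{j\neq k} \frac{\psi(\zeta_j)-\psi(\zeta_k)}{\zeta_j-\zeta_k},$$
so the double sum equals $2\,\RN{1}_N[\psi]$ and contributes $\beta\,\RN{1}_N[\psi]$ after multiplication by $\beta/2$. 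Assembling the three pieces produces $\E_N W_N^+[\psi] = 0$.

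The only subtle point is justifying the vanishing of the boundary term and the legality of the product rule in view of the logarithmic singularities of $\log p_N$ on the diagonal. For $\beta > 0$ the prefactor $\prod_{j>k}|\zeta_j - \zeta_k|^\beta$ in $p_N$ vanishes on the diagonal and tames the apparent pole in $\pa_{\zeta_j}\log p_N$; moreover, the symmetrisation above shows that the singular kernel $1/(\zeta_j-\zeta_k)$ is contracted against the smooth increment $\psi(\zeta_j)-\psi(\zeta_k)$, so all integrals are absolutely convergent. The decay $Q(z) \gg \log|z|$ ensures that boundary contributions at infinity vanish in each variable, so no serious obstacle is anticipated.
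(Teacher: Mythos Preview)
Your proof is correct and complete. The approach, however, differs from the paper's: the paper proves the identity by a \emph{reparametrisation} (change of variables) argument, substituting $\eta_j = \zeta_j + \tfrac{\varepsilon}{2}\psi(\zeta_j)$ into the partition-function integral, expanding to first order in $\varepsilon$, and using that $Z_N$ is independent of $\varepsilon$; this yields $\Re\,\E_N W_N^+[\psi]=0$, and repeating with $i\psi$ gives the imaginary part. You instead use the direct \emph{integration-by-parts} route, which the paper explicitly flags as an alternative (citing \cite{ameur2018random,bauerschmidt2016two}). Your method is arguably more economical---it delivers the full complex identity in one stroke without the real/imaginary split---but it requires the brief justification you give about the diagonal singularities, whereas the paper's change-of-variables argument sidesteps that issue because the symmetrised quotient $\tfrac{\psi(\zeta_j)-\psi(\zeta_k)}{\zeta_j-\zeta_k}$ appears from the outset in the expansion of $\log|\eta_j-\eta_k|$. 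One minor remark: your final sentence about decay of $Q$ controlling boundary terms at infinity is unnecessary, since the compact support of $\psi$ already kills those terms.
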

\begin{proof}
	By definition, the partition function $Z_N$ is given as
	\begin{equation*}
	Z_N= \int_{\mathbb{C}^N} \exp\left[ \beta \sum_{j > k=1}^N \log |\eta_{j}-\eta_{k}| - m\sum_{j=1}^{N} Q(\eta_{j}) \right] \prod_{j=1}^{N}dA(\eta_j).
	\end{equation*}
	For a fixed sequence $\bfs{\zeta}=(\zeta_1,\cdots,\zeta_N)$ and a positive constant $\ve$, let us denote
	\begin{equation}\label{eta zeta}
		\bfs{\eta} := (\eta_1, \cdots, \eta_N), \quad \eta_{j}:=\zeta_j+\frac{\ve}{2}\psi(\zeta_{j}).
	\end{equation}
	Then as $\ve \rightarrow 0$, we have
	\begin{align*}
	\log|\eta_{j}-\eta_{k}|= \log|\zeta_{j}-\zeta_{k}|+\frac{\ve}{2}\Re \, \frac{\psi(\zeta_{j})-\psi(\zeta_{k})}{\zeta_{j}-\zeta_{k}}+ O(\ve^2),
	\end{align*} 
	thus leading to 
	\begin{equation*}
	\beta \sum_{j > k=1}^N \log |\eta_{j}-\eta_{k}|= \beta \sum_{j >k=1}^N \log |\zeta_{j}-\zeta_{k}|+\ve \beta \, \Re \, \Big[ \RN{1}_N[\psi](\bfs{\zeta}) \Big]+O(\ve^2). 
	\end{equation*}
	Since we assume that $Q$ is $C^2$-smooth, we have
	\begin{equation*} 
	m\sum_{j=1}^{N} Q(\eta_{j})=m\sum_{j=1}^{N} Q(\zeta_{j})+\ve \Re \, \Big[ \RN{2}_N[\psi](\bfs{\zeta}) \Big]+O(\ve^2).
	\end{equation*}
	Note that due to the fact that the Jacobian of \eqref{eta zeta} is given as
	\begin{align*}
	dA(\eta_{j})&=\left( \left| 1+\frac{\ve}{2}\pa \psi(\zeta_{j}) \right|^2- \left|\frac{\ve}{2}\bp \psi(\zeta_{j}) \right|^2  \right) dA(\zeta_{j})
	\\
	&= \Big( 1+\ve \Re \, \pa \psi(\zeta_{j})+O(\ve^2) \Big) dA(\zeta_{j}), 
	\end{align*}
	we have 
	\begin{equation*}
	\prod_{j=1}^{N}dA(\eta_j)= \left(1+\ve\Re \, \Big[ \RN{3}_N[\psi](\bfs{\zeta}) \Big]+O(\ve^2) \right)  \prod_{j=1}^{N} dA(\zeta_j). 
	\end{equation*}
	Combining all the above equations, we obtain
	\begin{align*}
	Z_N
	=\int_{\mathbb{C}^n}
	\prod_{j>k=1}^N |\zeta_j-\zeta_k|^{\beta} e^{-m\sum_{j=1}^N Q(\zeta_j)} 
	\left(1+\ve\Re\left[ W_N^+[\psi](\bfs{\zeta}) \right]+O(\ve^2) \right) \prod_{j=1}^{N} dA(\zeta_{j}).
	\end{align*}
	Observe that since the partition function $Z_N$ does not depend on $\ve$, the coefficient of $\ve$ in the right-hand side of above identity is zero, i.e.,
	$\Re\, \E_N W_N^+[\psi]=0.$ 
	Now \eqref{EW=0} follows by same argument with $i \psi$. $\Box$
\end{proof}

Next we prove Theorem \ref{Ward} equation \eqref{LOOP}  using the previous Lemma \ref{EW=0LA} equation \eqref{EW=0}. 
\begin{proof}
Suppose that the point $\zeta \in \C$.  
Recall that $R_{N,k}$ is the $k$-point correlation function \eqref{k-pt cor} for the 2D Coulomb gas given by \eqref{2D Cou m}.
Let $\psi\in C_0^\infty(\C)$ be an arbitrary test function. By definition, we have
\begin{align*}
\E_N \RN{1}_N[\psi]&=\frac{1}{4}\int_{\C^2}\frac{\psi(\zeta)-\psi(\eta)}{\zeta-\eta} R_{N,2}(\zeta,\eta) dA(\zeta)dA(\eta)
\\
&=\frac{1}{2}  \int_{\C}\psi(\zeta)\int_{\C} \frac{R_{N,2}(\zeta,\eta) }{\zeta-\eta} dA(\eta) dA(\zeta) ,
\\
\E_N \RN{2}_N[\psi]&= m \int_{\C} \psi(\zeta) \pa Q(\zeta) R_{N,1}(\zeta) dA(\zeta),
\\
\E_N \RN{3}_N[\psi]&= \int_{\C} \pa \psi(\zeta) R_{N,1}(\zeta)dA(\zeta)= -\int_{\C} \psi(\zeta) \pa R_{N,1}(\zeta) dA(\zeta).
\end{align*}
Therefore \eqref{EW=0} is rewritten as
\begin{align*}
&\frac{\beta}{2}  \int_{\C}\psi(\zeta) \left( \int_{\C} \frac{R_{N,2}(\zeta,\eta) }{\zeta-\eta} dA(\eta) \right) dA(\zeta)
\\
=&  \int_{\C} \psi(\zeta) \Big( m\pa Q(\zeta) R_{N,1}(\zeta)+\pa R_{N,1}(\zeta) \Big) dA(\zeta).
\end{align*}
Since $\psi$ is an arbitrary test function, \eqref{LOOP} follows. 
\end{proof}

\subsection{Ward identities in 1D}\label{Ward1Dproof} 

For a test function $\psi \in C_0^\infty(\R)$ and $\bfs{\lambda}=(\lambda_1, \dots, \lambda_N)$, define the corresponding Ward stress energy functional 
\begin{equation}\label{Ward functional R}
\wh{W}_N^+[\psi]= 2\beta \, \RN{1}_N[\psi]- \RN{2}_N[\psi]+\RN{3}_N[\psi].
\end{equation} 
Here $\RN{1}_N, \RN{2}_N, \RN{3}_N$ are given as \eqref{Ward functional123} except that $\bfs{\zeta}$ and $Q$ are replaced with $\bfs{\lambda}$ and $V$. 
Following the proof in Subsection~\ref{Ward iden sub}, it is straightforward to obtain the same statement as in Lemma \ref{EW=0LA}:
\begin{equation}
\E_N \wh{W}_N^+[\psi]=0.
\end{equation}
By definition, we have that
\begin{align*}
\E_N \RN{1}_N[\psi]&=\frac{1}{4}\int_{\R^2}\frac{\psi(\lambda)-\psi(\lambda')}{\lambda-\lambda'} R_{N,2}(\lambda,\lambda') d\lambda d\lambda' ;
\\
\E_N \RN{2}_N[\psi]&= m\int_{\R} \psi(\lambda) V'(\lambda) R_{N,1}(\lambda) d\lambda;
\\
\E_N \RN{3}_N[\psi]&= -\int_{\R} \psi(\lambda) R'_{N,1}(\lambda) d\lambda.
\end{align*}
Therefore equation \eqref{Loop 1D} in Proposition \ref{Loop-1DProp} follows. 
\\

%%%%%%%%%%%%%%%%%%%%%%%%%%%%%%%%%%%%%%%%%%%%%%%%%%%%%%%%%%%%%%%
\section{The Saddle Point Method in 2D and 1D} \label{Saddlepoint}

In this section our approach will be more heuristic. First, we will calculate the free energy functional $F_N[\rho_N]$ for large but finite $N$, both for the 1D and 2D case together. In this way it will become clear, how the respective dimension $d=1,2$ enters. Furthermore, we will see how imposing the different scaling limits \eqref{beta n 2c} and 
\eqref{m scale1} leads to different limiting free energies \eqref{free} and \eqref{Frost Hamil}, respectively, that arise from a different order in $N$. 
Only after imposing the saddle point condition upon the limiting free energy functionals, we have to specify the dimension $d$. 
In 2D ($d=2$) we can use the Laplace operator to directly obtain an equation for the limiting density. In contrast, in 1D ($d=1$) we have to first pass over to the resolvent $G(z)$ or Stieltjes transform of the limiting density, 
to find a closed form equation that determines it, and then finally obtain the limiting density by taking the discontinuity along its support. 
We refer the reader to [Chapter 4,5]\cite{livan2017introduction} for the general concepts of the saddle point method in 1D and to \cite{zabrodin2006large} in 2D. 

\subsection{The free energy in 2D and 1D}

We begin by writing down the partition function for the Gibbs measures \eqref{2D Cou m} and \eqref{jpdf 1D} in a unified way,
\begin{equation}
\label{jpdf-gen}
Z_N=\int \prod_{j>k=1}^N |\zeta_j-\zeta_k|^{\beta} e^{-m\sum_{j=1}^N Q(\zeta_j)} \prod_{j=1}^{N}d\mu_d(\zeta_j).
\end{equation}
Here, for $d=2$ we integrate over $\mathbb{C}^N$ and $d\mu_2=dA$ is the area measure, that is the 2D Lebesgue measure over $\pi$, whereas for $d=1$ we integrate over $\mathbb{R}^N$ and $d\mu_1(\zeta)=d\zeta$ is the flat Lebesgue measure in 1D. Clearly, the integrand can be written as the exponential of the following 
energy function $E_N[ \bfs{\zeta}  ]$
\begin{equation}\label{EnergyN}
E_N[  \bfs{\zeta} ]:= m \sum_{j=1}^{N} Q(\zeta_{j}) - \beta \sum_{j>k=1}^N \log |\zeta_{j}-\zeta_{k}| \ .
\end{equation}
Our first goal is to change variables from the particle positions $\zeta_{j=1,\ldots,N}$ to the normalised one-point counting function $\rho_N(z)$ 
from \eqref{norm-count}
\begin{equation}\label{count}
\rho_N(z)=\frac{1}{N}\sum_{j=1}^N \delta^{(d)}(z-\zeta_{j})\ ,
\end{equation}
such that we can write
\begin{align} \label{Multiple int}
\begin{split}
Z_N &=\int \exp[-E_N[ \bfs{\zeta} ]]  \prod_{j=1}^{N}d\mu_d(\zeta_j) = 
\int \exp[-E_N[ \rho_N]]  J_N[\rho_N] \mathcal{D}[\rho_N]\\
&= \int \exp[-F_N[ \rho_N]] \mathcal{D}[\rho_N].
\end{split}
\end{align}
Here,  $F_N[\rho_N]$ is the free energy functional for large but finite $N$ we seek for, $\mathcal{D}[\rho_N]$ is the integration over the counting measure, and  $J_N[\rho_N]$ is the Jacobian that formally reads
\begin{equation} 
\label{JacobiN}
J_N[\rho_N]=\int \delta\left(\rho_N(z)-\frac{1}{N}\sum_{j=1}^N \delta^{(d)}(z-\zeta_{j})\right) \prod_{k=1}^N d\mu_d(\zeta_k)\ .
\end{equation}
It will be computed below for $N\gg1$, and its contribution to the free energy is called {\it entropy}. 
By standard thermodynamic arguments the ensemble will converge towards to the limiting density (equilibrium measure) that minimises the free energy (or maximises its, should it be negative). 

We begin by expressing the energy \eqref{EnergyN} in terms of the counting function \eqref{count}. For the first term we simply have 
\begin{equation*}
m\sum_{j=1}^{N} Q(\zeta_{j})=  N \int \rho_N(z)Q(z)d\mu_d(z)\ .
\end{equation*}
For the second term in \eqref{EnergyN} we can write, after symmetrising,
\begin{align*}
-\frac{\beta}{2}\sum_{j,k: j\neq k}^N \log |\zeta_{j}-\zeta_{k}|
=&-N^2\frac{\beta}{2} \int \rho_N(z)\rho_N(z')\log|z-z'|d\mu_d(z)d\mu_d(z')
\\
&+N\frac{\beta}{2} \int \rho_N(z) \log \ell(z) d\mu_d(z).
\end{align*}
Because the sum does not contain points at equal argument we have to subtract the diagonal contribution which is divergent. As we are only interested in the density on a global, macroscopic scale which is much larger than the mean particle distance, we have introduced a \textit{short-distance cut-off} $\ell(z)$ which may be position-dependent. This term is also called {\it self-energy}, and because the mean particle distance depends on the dimension $d$, in the bulk of the spectrum we have for large $N$
\begin{equation}
\ell(z) \simeq \left(\frac{1}{N\rho_N(z)}\right)^{1/d}\ ,\quad \textrm{with}\ d=1,2\ ,
\end{equation}
see e.g., \cite{livan2017introduction} for $d=1$ and \cite[Section 2]{zabrodin2006large} for $d=2$. Clearly this argument is not rigorous. The last ingredient we miss is the Jacobian \eqref{JacobiN} to be derived later, which for large but finite-$N$ reads
\begin{equation}\label{Jacobi}
J_N[\rho_N]= \exp \left[-N \int \rho_N(z) \log \rho_N(z) d\mu_d(z)-N \log N+\gamma_d N+o(N) \right].
\end{equation}
Here, $\gamma_d$ is some constant, see \cite[eq. (2.15)]{zabrodin2006large} for $d=2$, which is apparently unknown for $d=1$ \cite{livan2017introduction}.
Collecting all contributions we obtain the following result for the free energy functional at large-$N$: 
\begin{align}
\begin{split}
F_N[\rho_N]\approx&\ m N \int Q(z) \rho_N(z) d\mu_d(z)
\\
&-\frac{\beta}{2}N^2 \int \rho_N(z)\rho_N(z')\log|z-z'|d\mu_d(z)d\mu_d(z')
\\
&+\frac{N}{2d}(2d-\beta) \int \rho_N(z) \log \rho_N(z) d\mu_d(z)
\\
&+C_d \left(\int \rho_N(z) d\mu_d(z) -1\right).
\label{FN}
\end{split}
\end{align}
We have added a term that ensures the correct normalisation of the density, and the constant $C_d$ is called \textit{Lagrange multiplier}. 
For simplicity we have suppressed all other constants and $o(N)$ terms here, as they will not play any r\^ole later. 

Notice that for $\beta=2$ in $d=1$ and for $\beta=4$ in $d=2$ the term in the third line of \eqref{FN} is absent, cf. \cite{zabrodin2006large,itoi1997universal}, respectively. This leads to the well known fact that for these particular values of $\beta$ the free energy can be expanded in powers of $1/N^2$ also called genus expansion, whereas the expansion is in powers of $1/N$ in all other cases, see \cite{BorotGuionnet} for a recent work. 

Before we turn to the different large-$N$ limits let us briefly derive the entropic factor $J_N[\rho_N]$ which can be computed by simple combinatorial arguments, see e.g.,\cite{allez2012invariant2,zabrodin2006large}. By definition, $J_N[\rho_N]$ is the number of microstates which are compatible with a given local density function 
$\rho_N(z)$.  First, note that we may assume that almost all the particles are confined inside a large square for $d=2$ (line for $d=1$) since $Q$ is sufficiently large near infinity. We divide this square (line) into $N$ equal cells $r_{j=1,\cdots, N}$ and set 
$ N_j := N \int_{r_j} \rho_N(z) d\mu_d(z),$ which implies that $N_j/N$ is the local density in the cell $r_j$. Note that $N=N_1+\cdots+N_N$ and by definition, $J_N[\rho_N]$ is asymptotically the number of cases that each cell $r_j$ is occupied by $N_j$. Then by Stirling's formula, for large-$N$ we have 
$$
\frac{N!}{N_1 ! \dots N_N !} \sim \left( \frac{N_1}{N} \right)^{-N_1} \cdots \left( \frac{N_N}{N} \right)^{-N_N}  (2\pi)^{-N/2} \frac{\sqrt{N}}{\sqrt{ N_1} \cdots \sqrt{N_N}}.
$$
Taking the logarithm 
we obtain 
\begin{align*}
\log \frac{N!}{N_1 ! \dots N_N !} &\sim -\sum_{j=1}^{N}N_j \log \frac{N_j}{N} -\frac{N}{2} \log 2\pi+\frac{1}{2} \Big(    \log N- \sum_{j=1}^{N} \log N_j        \Big)
\\
&=-N \sum_{j=1}^{N} \frac{N_j}{N} \log \frac{N_j}{N} -\frac{N}{2} \log 2\pi-\frac{1}{2} \Big((N-1)  \log N+\sum_{j=1}^{N} \log \frac{N_j}{N}        \Big).
\end{align*}
Therefore, in the large-$N$ limit  we obtain \eqref{Jacobi}. Notice that the last term on the right-hand side above will also contain the density, but is of sub-leading order. 
We also refer the reader to \cite{livan2017introduction,dean2008extreme} for a different approach %to obtain evaluation of the entropy term for large $N$ 
using the integral representation for the delta function in \eqref{JacobiN}.

Let us discuss the two different scaling large-$N$ limits  \eqref{beta n 2c} and 
\eqref{m scale1} of the free energy \eqref{FN}, starting with the more standard limit \eqref{m scale1}.
\begin{enumerate}[(i)]
	\item First, let $m=\beta N/2$ and $\beta=O(1)$ be fixed according to \eqref{m scale1} which is the standard scaling limit for $\beta$-ensembles. Then the leading contribution of the free energy \eqref{FN} is of order $N^2$ (from the first and second line) and results from the contribution of the energy terms only. Assuming that the Lagrange multiplier $C_d$ is of order unity we obtain
	\begin{align*}
F[\rho]&=	
	\lim_{N\to\infty}\frac{2F_N[\rho_N]}{\beta N^2}\\
	&=\int Q(\zeta) \rho(\zeta) d\mu_d(\zeta)-\int \log {|\zeta-\eta|} \rho(\zeta) \rho(\eta) d\mu_d(\zeta) d\mu_d(\eta)\ .
	\end{align*}
	It agrees with the functional \eqref{Frost Hamil}.
	The equation determining the density $\rho_*$, that minimises the free energy in either limit, is given by the \textit{saddle point equation}, a necessary condition to have an extremum. We therefore require the functional derivative of $F$ to vanish at the equilibrium density $\rho_*$:
\begin{equation}\label{SP-F-standard}
0=\frac{\delta F(\rho)}{\delta \rho(\zeta)}\Big|_{\rho=\rho_*}=Q(\zeta)-2 \int\log |\zeta-\eta| \, \rho_*(\eta) \, d\mu_d(\eta)\ .
	\end{equation} 
From a heuristic point of view we can easily see that this indeed minimises the free energy. Taking a second functional derivative that we regularise by choosing $\xi$ slightly away from $\zeta$, we have 	
	\begin{equation}\label{SP-max}
\frac{\delta^2 F(\rho)}{\delta \rho(\zeta)\delta \rho(\xi)}\Big|_{\rho=\rho_*}=-2 \log |\zeta-\xi| >0 \ ,
	\end{equation} 
which is clearly positive, as for $\xi\approx\zeta$ the logarithm becomes negative. The fact that the solution of \eqref{SP-F-standard} is a minimum can be made rigorous and we refer to \cite{johansson1998fluctuations} and \cite{HM13} for references for $d=1,2$, respectively.
	
	\item Second, let $m=1$ and $\beta= 2c /N$ for some $c \in (-d,\infty)$  which agrees with our proposed scaling \eqref{beta n 2c} for $d=2$ and \cite{allez2012invariant} for $d=1$. 
	In this case we have that both energy and entropy contribute and the leading order in \eqref{FN} is now rather $N$. Therefore we obtain instead 
	\begin{align*}
F_c[\rho]=&\lim_{N\to\infty}	\frac{F_N[\rho_N]}{N}\\
=&  \int  Q(\zeta) \rho(\zeta) d\mu_d(\zeta)-c \int\log {|\zeta-\eta|} \rho(\zeta) \rho(\eta) d\mu_d(\zeta) d\mu_d(\eta)
	\\
	&+\int\rho(\zeta) \log \rho(\zeta) d\mu_d(\zeta)\ ,
	\end{align*}
	which agrees with the free energy claimed in \eqref{free}.
Here, the corresponding {saddle point equation} reads 
\begin{align}
\begin{split}
\label{SP-F}
0&=\frac{\delta F_c(\rho)}{\delta \rho}\Big|_{\rho=\rho_c}
\\
&=Q(\zeta)-2c \int \log |\zeta-\eta| \, \rho_c(\eta) \, d\mu_d(\eta) + \log \rho_c(\zeta)+1.
\end{split}
\end{align}
	The second functional derivative that decides whether we have a minimum or a maximum leads to 
	\begin{equation}\label{SP-max-c}
\frac{\delta^2 F_c(\rho)}{\delta \rho(\zeta)\delta \rho(\xi)}\Big|_{\rho=\rho_c}=-2c \log |\zeta-\xi| +\frac{\delta^{(d)}(\zeta-\xi)}{\rho_c(\zeta)} \ .
	\end{equation} 	
	Due to our regularisation $\xi\approx\zeta$ the logarithm becomes negative and, ignoring the second term at this scale, we obtain a minimum for $c>0$ and a maximum for $c<0$. 
This statement has been made rigorous for the Gaussian plus linear potential for $d=2$ in \cite{caglioti1992special}. 	
	For $c=0$ we do not have an extremum, and the solution of \eqref{SP-F} for $c=0$ leads to $\rho_{c=0}(\zeta)\sim\exp[-Q(\zeta)]$, as is expected for non-interacting particles. 
	\end{enumerate}	

\subsection{Saddle point equation for the density in 2D}
 If we want to transform the saddle point equation into a closed differential equation for $\rho_c$ we have have to distinguish now the cases $d=1$ and $d=2$. While $d=1$ is considerably more complicated, passing through the resolvent as explained in the next subsection, $d=2$ is in principle very simple. This is due to the fact that the Laplacian acting on the logarithm gives a Dirac delta, which in our convention \eqref{rad Lap} reads $\Delta\log|z|=\frac{\pi}{2}\delta^{(2)}(z)$. In the limit (i) above we thus obtain from \eqref{SP-F-standard} that
 \begin{equation}
 \label{MFE-1}
 \Delta Q(\zeta)-\rho_*(\zeta)=0\ ,
 \end{equation}
which has to hold on the limiting support, the droplet, as claimed in \eqref{Laplacian growth}.
For the limit (ii) from \eqref{SP-F} we get
\begin{equation}
\label{Laplace-1}
\Delta Q(\zeta) -c \, \rho_c(\zeta) +  \Delta\log \rho_c(\zeta) =0\ ,
\end{equation}
which is supported apriori on the entire complex plane. This is the  mean field equation \eqref{MFE}. As already explained in the introduction we have been unable to solve this equation analytically. We refer again to the numerical solution for two examples presented there for radially symmetric potentials, to which we turn now.

For simplicity, we focus on the potentials $Q(z)=|z|^{2\alpha}/2$, where $\alpha\geq1$. Recall that the radial part $g_c(|z|):=\rho_c(z)/\pi$ of the crossover density satisfies 
\begin{equation} 
\label{radial-Freud1}
4\pi c \, r \, g_c(r)=2\alpha^2 r^{2\alpha-1} +\left(r \, (\log g_c(r))'\right)'.
\end{equation}
Based on this it is not difficult to see the asymptotic behaviour in $c$ for $c\to\infty$ and $c\to0$ as quoted in \eqref{c-limits}. Namely, for $c\to\infty$ in order to get a finite answer on left- and right-hand side we need that $g_c(r)\sim1/c$. Neglecting the last term in \eqref{radial-Freud1}, which self-consistently leads from \eqref{Laplace-1} to \eqref{MFE-1}, we are lead to 
\begin{equation*}
g_c(r)\sim \frac{\alpha^2}{2\pi c} r^{2\alpha-2} \ .
\end{equation*}
The limiting support on a disc of radius $b$ simply follows by imposing the normalisation condition
\begin{equation*}
\frac{1}{2\pi}=\int_0^b dr r g_c(r)= \frac{\alpha}{4\pi c} b^{2\alpha} \ ,
\end{equation*}
which leads to $b=(2c/\alpha)^{1/(2\alpha)}$ as claimed in \eqref{c-limits}. For $c\to0$ we then obtain 
$g_c(r)\sim k\exp[-r^{2\alpha}/2]$ and we simply have to compute the normalisation constant $k$ from 
\begin{equation*}
\frac{1}{2\pi}=\int_0^\infty dr \, r \, k \, \exp(-r^{2\alpha}/2)=2^{\frac{1}{\alpha}-1}\Gamma(1+1/\alpha)\ .
\end{equation*}
This implies $k=1/(\pi 2^{1/\alpha}\Gamma(1+1/\alpha))$ as claimed in \eqref{c-limits}. Of course, the statement $g_c(r)\sim k\exp[-Q(r)]$ holds for more general radially symmetric potentials in the limit $c\to0$, with the difficulty to determine the normalisation constant $k$ for a given $Q$.

When stating our main results we have derived already the asymptotic behaviour \eqref{large-r} for radially symmetric potentials for large radii $r\to\infty$. Let us add a few remarks here about a possible expansion for small $r$. 
Assuming that $g_c(0)\neq0$, which will be true for $c$ not too large, let us denote 
\begin{align}
\begin{split}
g_c(r):&=g_c(0)\,\exp\left( \sum_{j=1}^{\infty} a_j\,r^{2j} \right)
\end{split}
\end{align}
in order to obtain an expansion for small $r$.
By inserting the above expression in \eqref{radial-Freud1} and comparing the coefficients, one can iteratively express the $a_j$ through $g_c(0)$, thus leading to
\begin{align}
\begin{split}
a_1&=
\begin{cases}
\pi cg_c(0)-\frac12 &\text{if }\alpha=1
\\
\pi c g_c(0) &\text{if }\alpha=2
\end{cases}, \\
a_2&=
\begin{cases}
\frac14(\pi c g_c(0))^2-\frac18\pi c g_c(0) &\text{if }\alpha=1
\\
\frac14(\pi c g_c(0))^2-\frac12 &\text{if }\alpha=2 
\end{cases}.
\end{split}
\end{align}
We can immediately compare this to what we have obtained in the previous paragraph for $c\to0$, where we found 
$g_{c=0}(0)=1/(\pi 2^{1/\alpha}\Gamma(1+1/\alpha))$. Therefore $cg_{c=0}(0)$ is vanishing in the limit $c\to0$ and we have
\begin{align} \label{Freud asmp r0}
\begin{split}
g_c(r)=\begin{cases}
g_c(0) \left(1-\frac12 r^2+O(r^6) \right)  &\text{if }\alpha=1
\vspace{0.5em}
\\
g_c(0) \left(1-\frac12 r^4+O(r^6) \right) &\text{if }\alpha=2
\end{cases}\ ,
\end{split}
\end{align}
which agrees with $g_{c=0}(r)=g_{c=0}(0)\exp(-r^{2\alpha}/2)$. In fact it is not difficult to see that for $c\to0$ all other coefficients vanish and $a_j=-\frac12 \delta_{j,\alpha}$. For $c\to\infty$ the assumption $g_c(0)\neq0$ breaks down unless $\alpha=1$, where we obtain $g_c(0)\sim \frac{1}{2\pi c}$. 

This ends our short survey on the asymptotic behaviour of the solution of \eqref{radial-Freud1} in $c$ and radius $r$ for $\alpha=1,2$.

\subsection{Saddle point equation for the density in 1D}

We will now discuss the saddle point equation in 1D where we will focus on the second limit (ii) above, using \eqref{SP-F}.
It turns out that in order to determine the solution for the density it is more convenient to pass through the resolvent to be defined in \eqref{resolvent} below, and we will illustrate this through an example. 

Denoting the particle positions by $\lambda\in \R$ (instead of $\zeta$), the real potential $Q$ by $V$, and writing $\rho_c$ for the limiting density function on $\R$ we may differentiate \eqref{SP-F} with respect to $\lambda$: 
\begin{equation} \label{MFE 1D}
0=V'(\lambda)-2c\, \textrm{Pr} \int_{\R} \frac{\rho_c(\lambda')}{\lambda-\lambda'}d\lambda' +\Big( \log \rho_c (\lambda) \Big)'.
\end{equation} 
Here, we have to take the principal value (Pr) of the real integral.
Let us  denote by $G_c(z)$ the Stieltjes transformation (or resolvent) of the limiting density $\rho_c$. It is given as 
\begin{equation}
\label{resolvent}
G_c(z):=\int_{\R} \frac{\rho_c(\lambda)}{\lambda-z}d\lambda, \quad z \in \C \sm \R. 
\end{equation}
From the normalisation of the density we can see that at large argument it behaves as $G_c(z)\sim-\frac{1}{z}$.
Our next goal is to derive a closed form equation for the resolvent. 
To that aim we multiply equation \eqref{MFE 1D} by $\rho_c(\lambda)/(\lambda-z)$ and integrate over the real line, to obtain 
\begin{equation}
\label{SP-int}
0=\int_{\R} \frac{V'(\lambda)\rho_c(\lambda)}{\lambda-z}d\lambda 
-2c\, \textrm{Pr} \int_{\R^2} \frac{\rho_c(\lambda)}{\lambda-z} \frac{ \rho_c(\lambda')}{\lambda-\lambda'}d\lambda'  d\lambda 
+\int_{\R} \frac{\rho'_c(\lambda)}{\lambda-z}d\lambda .
\end{equation}
The last term can be most easily rewritten, after using integration by parts:
\begin{equation}
\label{Gprime}
\int_{\R}\frac{\rho'_c(\lambda)}{\lambda-z} d\lambda ={G_c}'(z).
\end{equation}
For the second term with the double integral we use the identity 
$$
\frac{1}{\lambda-z} \frac{1}{\lambda-\lambda'}=-\left(\frac{1}{\lambda-z}-\frac{1}{\lambda-\lambda'}\right)\frac{1}{\lambda'-z},
$$
to observe that 
$$
\textrm{Pr} \int_{\R^2} \frac{\rho_c(\lambda)}{\lambda-z} \frac{ \rho_c(\lambda')}{\lambda-\lambda'}d\lambda'  d\lambda = 
-\left(\int_{\R} \frac{\rho_c(\lambda)}{\lambda-z} d\lambda \right)^2
+\textrm{Pr} \int_{\R^2} \frac{\rho_c(\lambda)}{\lambda-\lambda'} \frac{ \rho_c(\lambda')}{\lambda'-z}d\lambda'  d\lambda,
$$
after dropping the principal value in the first term on the right-hand side. Observing that both integrals agree after a change of variables, we thus obtain 
\begin{equation}
\label{Gsquare}
\textrm{Pr} \int_{\R^2} \frac{\rho_c(\lambda)}{\lambda-z} \frac{ \rho_c(\lambda')}{\lambda-\lambda'}d\lambda'  d\lambda = -\frac12
G_c^2(z).
\end{equation}

For the remaining first term in \eqref{SP-int} we have to make a further approximation. In the standard scaling limit \eqref{m scale1} for particles on $\R$ the limiting density has a compact support, for sufficiently confining potentials. Then, one can easily define a contour in the complex plane that encircles the support, but not the point $z\in \C \sm \R$. 
In our case only for large $c$ we know that the limiting support localises on the semi-circle or its generalisation. But also for small $c$ the density typically decreases exponentially for large arguments, e.g. for the class of Freud potentials. We therefore assume that we may truncate the integral on a large interval $J$, making an exponentially small error. At the end of the calculation we can then take the limit $J\to\R$. 
Note that we can allow $V'$ to have poles on the real line, as in one of our examples below, but not to have a cut that extends to infinity.

Let us therefore define a contour $\mathcal{C}_J$ that encircles $J$ in counter-clockwise fashion and does not contain the point $z\in \C \sm \R$. We can then use the residue theorem to arrive at
\begin{align}
\label{VprimeG}
\begin{split}
\int_{\R}  \frac{V'(\lambda)\rho_c(\lambda)}{\lambda-z}d\lambda&\approx\int_{J}\rho_c(\lambda) \oint_{\mathcal{C}_J}\frac{1}{w-\lambda} \frac{V'(w)}{w-z}  \frac{dw}{2\pi i} d\lambda 
\\
&=\oint_{\mathcal{C}_J} G_c(w) \frac{V'(w)}{w-z}\frac{dw}{2\pi i} 
\\
&=\oint_{\infty} \frac{G_c(w) V'(w)}{w-z}\frac{dw}{2\pi i} + G_c(z) V'(z)
\end{split}
\end{align}
In the second step we have interchanged integrations, to obtain an expression depending only on the resolvent. In the last step we have pulled the contour to infinity, picking up the contribution from the pole at $z$. Here nothing depends any more on the regularising integral $J$. 
Combining all the above equations \eqref{Gprime}, \eqref{Gsquare} and \eqref{VprimeG} we obtain the following closed form equation, assuming that our prescribed cut-off procedure can be made rigorous:
\begin{equation} \label{Rel Green}
0=\oint_{\infty}  \frac{G_c(w) V'(w)}{w-z}\frac{dw}{2\pi i}
+ G_c(z) V'(z)
+ c \, G_c^2(z)+{G_c}'(z).
\end{equation}
The remaining contribution at infinity is not easily evaluated for a general potential $V$.
In the standard limit \eqref{m scale1} at fixed $\beta$ an ansatz can be made for the compact support to consist of a finite union of intervals. 

\begin{example}Here, we consider the Gaussian potential $V_G(w)=w^2/2$, cf. \cite{allez2012invariant}. In that case we may exploit the behaviour of the resolvent \eqref{resolvent} at infinity, $G_c(w)\sim-1/w$, to evaluate the contour integral at infinity to give unity. We thus have 
\begin{equation} \label{Rel Green Gauss}
0=1
+ G_c(z) z
+ c \, G_c^2(z)+{G_c}'(z)\ .
\end{equation}
which agrees with the equation found in \cite{allez2012invariant}. There, the derivation of \eqref{Rel Green Gauss} could be made rigorous using \cite{rogers1993interacting}. In \cite{allez2012invariant} this equation was solved for the density $\rho_c$ by taking the discontinuity of $G_c$ along the real line, see \eqref{Inversion} below. We will illustrate this procedure with a more general example below. Consequently, the authors found the following explicit formula for the crossover density $\rho_c$ in terms of the parabolic cylinder function $D$:
\begin{equation} \label{Gauss-Wigner}
\rho_c(\lambda)=\frac{1}{\sqrt{2\pi} \, \Gamma(1+c) } |D_{-c} (i\lambda) |^{-2} .
\end{equation}
\end{example}

\begin{example} 
We now slightly extend the previous example by considering a Gaussian potential with an additional logarithmic singularity. A similar case was considered on $\R_+$ in \cite{allez2012invariant2}.
For any real parameter $a>-1$, let us consider the potential  
\begin{equation}\label{Gauss+a}
V_a(x)= \frac12 x^2-a\log|x|\ .
\end{equation}
The contour integral in \eqref{Rel Green} at infinity can be solved as in the previous example, as the pole of $V_a'$ does not contribute there. 
Therefore, the resolvent $G_{c,a}$ satisfies the following Riccati type equation 
\begin{equation} \label{Rect G}
0=1
+ G_{c,a}(z) \left(z-\frac{a}{z}\right)
+ c \, G_{c,a}^2(z)+{G_{c,a}}'(z)\ .
\end{equation}
Let us denote  
\begin{equation}
G_{c,a}(z)=:\frac{1}{c}\Big( \log u(z) \Big)'.
\end{equation} 
Then the ODE \eqref{Rect G} can be rewritten in terms of the new function $u(z)$ as  
\begin{equation}
\label{udiff}
0=cu(z)+(z-a/z)u'(z)+u''(z)\ .
\end{equation}
A further change of variables to $w=-z^2/2$, with 
$u(z)=:f(-z^2/2=w)$,  casts this into the form of  Kummer's differential equation 
\begin{equation} \label{Kummer}
w \, f''(w)+\left( \frac{1-a}{2}-w \right)f'(w)-\frac{c}{2}f(w)=0.
\end{equation}
Moreover, since $G_{c,a}(z)\sim -1/z$ near infinity, we have $u(z)\sim |z|^{c}$ and thus 
\begin{equation} \label{Kummer inf}
f(w) \sim |w|^{-c/2}, \quad |w|\rightarrow \infty\ .
\end{equation}
It is well-known that the solution of \eqref{Kummer} satisfying \eqref{Kummer inf} is uniquely determined (up to a multiplicative constant) and reads  
\begin{equation} \label{Kummer ftn}
f(w)= U\left( \frac{c}{2}, \frac{1-a}{2}, w  \right),
\end{equation}
see e.g., \cite[p.322]{olver2010nist}. Here, $U$ is  Kummer's (confluent) hypergeometric function given as the analytic continuation of the integral representation 
\begin{equation}
U(\alpha,\gamma,z)=\frac{1}{\Gamma(\alpha)} \int_{0}^{\infty} e^{-z t} t^{\alpha-1} (1+t)^{\gamma-\alpha-1} dt, \quad (\Re\,\alpha>0). 
\end{equation} 
Now let us introduce  
\begin{align}
\begin{split}
y(z):=e^{z^2/4} z^{-a/2}  u(z),
\end{split}
\end{align}
which, upon using \eqref{udiff}, leads to 
\begin{equation} \label{ODE y}
y''(z)+\Big[ c-\frac{1}{2}+\frac{a}{2}-\frac{1}{4}z^2-\left(\frac{a^2}{4}+\frac{a}{2}\right)\frac{1}{z^2}  \Big]y(z)=0.
\end{equation}
By the inversion formula, the crossover density $\rho_{c,a}(\lambda)$ is given as  
\begin{align}\label{Inversion}
\begin{split}
\rho_{c,a}(\lambda)&=\frac{1}{\pi} \lim\limits_{\ve \rightarrow 0}\Im \Big[ G(\lambda-i\ve)\Big]
\\
&=\frac{1}{c \pi } \frac{1}{|y(\lambda)|^2} \Big( \Im[ y'(\lambda)] \Re [y(\lambda)]- \Im [y(\lambda)] \Re [y'(\lambda)] \Big) ,
\end{split}
\end{align}
for $\lambda\in\R$.
Observe here that by \eqref{ODE y}, the term $(\Im y' \, \Re y- \Im y \, \Re y')$ is constant along the real line, having a vanishing derivative. Therefore, by \eqref{Kummer ftn}, we finally conclude that  
\begin{equation} \label{Gausslog}
\rho_{c,a}(\lambda)= \frac{1}{Z(c,a)}\,  e^{-\frac{1}{2}\lambda^2+a\log|\lambda|  } \, \left| U\left(\frac{c}{2},\frac{1-a}{2},-\frac{\lambda^2}{2}\right) \right|^{-2},
\end{equation}
where $Z(c,a)$ is a normalisation constant. 
This is the solution for the crossover density for the potential \eqref{Gauss+a}. Unfortunately we have bee unable to determine the normalisation analytically for general parameter values $a$ and $c$, except for $c=0$ or $a=0$ as shown below. For that reason, in the plots presented at the end of Section~\ref{Main} the normalisation has been computed numerically. 

In the particular case $c=0$ we observe that   
\begin{equation}
U(0,\gamma,z)\equiv 1,
\end{equation} 
see \cite[p.327]{olver2010nist}. Therefore, we obtain the expected extremal case that
\begin{equation}
\rho_{0,a}(\lambda)=\frac{1}{2^{(1+a)/2}\, \Gamma( \frac{1+a}{2})} \,  e^{-\frac{1}{2}\lambda^2+a\log|\lambda| },
\end{equation}
with the density being proportional to $e^{-V}$.
On the other hand, if $a=0$, we can recover the density in \cite{allez2012invariant}, due to the identity (see e.g., \cite[p.328]{olver2010nist})
\begin{equation}
D_{-c}(iz) =2^{-c/2} \, e^{z^2/4}\, U\left( \frac{c}{2}, \frac{1}{2}, -\frac{z^2}{2} \right) . 
\end{equation}

In Section \ref{Main} in Figure~\ref{Gausslog1} (resp., Figure~\ref{Gausslog2}) we show an example for the crossover density \eqref{Gausslog} with $a=1/2$ (resp., $a=-1/2$). For comparison in Figure~\ref{Gausslog0} the known case $a=0$ from \cite{allez2012invariant}, interpolating between Gauss' and Wigner's semi-circular distribution, is also given.

\end{example}

\begin{acknowledgements}
The authors gratefully acknowledge discussions and helpful suggestions of Trinh Khanh Duy, Adrien Hardy, Nam-Gyu Kang, Myl\`{e}ne Ma\"{i}da, Seong-Mi Seo, Pierpaolo Vivo and Oleg Zaboronski, as well as detailed comments by Yacin Ameur and Gaultier Lambert on a 	preliminary version of the paper. 
We also wish to express our gratitude to Jeongho Kim for several valuable comments concerning the numerical verifications.

%	Financial support by the German Research Foundation (DFG) through IRTG 2235 is acknowledged.
%
\end{acknowledgements}

% BibTeX users please use one of
%\bibliographystyle{spbasic}      % basic style, author-year citations
%\bibliographystyle{spmpsci}      % mathematics and physical sciences
%\bibliographystyle{spphys}       % APS-like style for physics
%\bibliography{}   % name your BibTeX data base

%%%%%%%%%%%%%%%%%%%%%bibliography%%%%%%%%%%%%%%%%%%%%%%%%%%%%%%%%%%%
\bibliographystyle{abbrv}
\bibliography{RMTbib}

\begin{thebibliography}{10}

\bibitem{akemann2011oxford}
G.~Akemann, J.~Baik, and P.~Di~Francesco~(Editors).
\newblock {\em The {O}xford {H}andbook of {R}andom {M}atrix {T}heory}.
\newblock Oxford University Press, Oxford, 2011.

\bibitem{akemann2016universality}
G.~Akemann, M.~Cikovic, and M.~Venker.
\newblock Universality at weak and strong non-hermiticity beyond the elliptic
  ginibre ensemble.
\newblock {\em Commun. Math. Phys.}, 10.1007/s00220-018-3201-1, 2018.

\bibitem{allez2012invariant}
R.~Allez, J.-P. Bouchaud, and A.~Guionnet.
\newblock Invariant {$\beta$}-ensembles and the {G}auss-{W}igner crossover.
\newblock {\em Phys. Rev. Lett.}, 109(9):094102, 2012.

\bibitem{allez2012invariant2}
R.~Allez, J.-P. Bouchaud, S.~N. Majumdar, and P.~Vivo.
\newblock Invariant $\beta$-{W}ishart ensembles, crossover densities and
  asymptotic corrections to the {M}ar{\v{c}}enko--{P}astur law.
\newblock {\em J. Phys. A: Math. Theor.}, 46(1):015001, 2012.

\bibitem{MR3078021}
R.~Allez and A.~Guionnet.
\newblock A diffusive matrix model for invariant {$\beta$}-ensembles.
\newblock {\em Electron. J. Probab.}, 18(62):1--30, 2013.

\bibitem{Ameur18low}
Y.~Ameur.
\newblock Repulsion in low temperature {$\beta$}-ensembles.
\newblock {\em Commun. Math. Phys.}, 359(3):1079--1089, 2018.

\bibitem{AHM15}
Y.~Ameur, H.~Hedenmalm, and N.~Makarov.
\newblock Random normal matrices and {W}ard identities.
\newblock {\em Ann. Probab.}, 43(3):1157--1201, 2015.

\bibitem{AKM14}
Y.~Ameur, N.-G. Kang, and N.~Makarov.
\newblock Rescaling {W}ard identities in the random normal matrix model.
\newblock {\em Constr. Approx.}, pages 1--65, 2018.

\bibitem{ameur2018random}
Y.~Ameur, N.-G. Kang, and S.-M. Seo.
\newblock The random normal matrix model: insertion of a point charge.
\newblock {\em preprint arXiv:1804.08587}, 2018.

\bibitem{anderson2010introduction}
G.~W. Anderson, A.~Guionnet, and O.~Zeitouni.
\newblock {\em An {I}ntroduction to {R}andom {M}atrices}, volume 118 of {\em
  {C}ambridge Studies in {A}dvanced {M}athematics}.
\newblock Cambridge University Press, Cambridge New York, 2010.

\bibitem{bauerschmidt2016two}
R.~Bauerschmidt, P.~Bourgade, M.~Nikula, and H.-T. Yau.
\newblock The two-dimensional {C}oulomb plasma: quasi-free approximation and
  central limit theorem.
\newblock {\em preprint arXiv:1609.08582}, 2016.

\bibitem{berman2008determinantal}
R.~J. Berman.
\newblock Determinantal point processes and fermions on complex manifolds: bulk
  universality.
\newblock {\em preprint arXiv:0811.3341}, 2008.

\bibitem{bolley2017dynamics}
F.~Bolley, D.~Chafai, and J.~Fontbona.
\newblock Dynamics of a planar {C}oulomb gas.
\newblock {\em preprint arXiv:1706.08776}, 2017.

\bibitem{BorotGuionnet}
G.~Borot, A.~Guionnet, and K.~K. Kozlowski.
\newblock Large-{$N$} asymptotic expansion for mean field models with {C}oulomb
  gas interaction.
\newblock {\em Int. Math. Res. Not.}, (20):10451--10524, 2015.

\bibitem{caglioti1992special}
E.~Caglioti, P.-L. Lions, C.~Marchioro, and M.~Pulvirenti.
\newblock A special class of stationary flows for two-dimensional {E}uler
  equations: a statistical mechanics description.
\newblock {\em Commun. Math. Phys.}, 143(3):501--525, 1992.

\bibitem{caglioti1995special}
E.~Caglioti, P.-L. Lions, C.~Marchioro, and M.~Pulvirenti.
\newblock A special class of stationary flows for two-dimensional {E}uler
  equations: a statistical mechanics description. {P}art ii.
\newblock {\em Commun. Math. Phys.}, 174(2):229--260, 1995.

\bibitem{cepa1997diffusing}
E.~C{\'e}pa and D.~L{\'e}pingle.
\newblock Diffusing particles with electrostatic repulsion.
\newblock {\em Probab. Theory Relat. Fields}, 107(4):429--449, 1997.

\bibitem{chafai2016concentration}
D.~Chafa{\"\i}, A.~Hardy, and M.~Ma{\"\i}da.
\newblock Concentration for {C}oulomb gases and {C}oulomb transport
  inequalities.
\newblock {\em preprint arXiv:1610.00980}, 2016.

\bibitem{chalker1998eigenvector}
J.~T. Chalker and B.~Mehlig.
\newblock Eigenvector statistics in non-{H}ermitian random matrix ensembles.
\newblock {\em Phys. Rev. Lett.}, 81(16):3367, 1998.

\bibitem{chau1998structure}
L.-L. Chau and O.~Zaboronsky.
\newblock On the structure of correlation functions in the normal matrix model.
\newblock {\em Commun. Math. Phys.}, 196(1):203--247, 1998.

\bibitem{crowdy1997general}
D.~G. Crowdy.
\newblock General solutions to the 2d {L}iouville equation.
\newblock {\em Int. J. Eng. Sci.}, 35(2):141--149, 1997.

\bibitem{david2015renormalizability}
F.~David, A.~Kupiainen, R.~Rhodes, and V.~Vargas.
\newblock Renormalizability of liouville quantum gravity at the seiberg bound.
\newblock {\em preprint arXiv:1506.01968}, 2015.

\bibitem{dean2008extreme}
D.~S. Dean and S.~N. Majumdar.
\newblock Extreme value statistics of eigenvalues of {G}aussian random
  matrices.
\newblock {\em Phys. Rev. E}, 77(4):041108, 2008.

\bibitem{dumitriu2002matrix}
I.~Dumitriu and A.~Edelman.
\newblock Matrix models for beta ensembles.
\newblock {\em J. Math. Phys.}, 43(11):5830--5847, 2002.

\bibitem{duy2015mean}
K.~T. Duy and T.~Shirai.
\newblock The mean spectral measures of random {J}acobi matrices related to
  {G}aussian beta ensembles.
\newblock {\em Electron. Commun. Probab.}, 20, 2015.

\bibitem{dyson1962brownian}
F.~J. Dyson.
\newblock A {B}rownian-motion model for the eigenvalues of a random matrix.
\newblock {\em J. Math. Phys.}, 3(6):1191--1198, 1962.

\bibitem{edelman1997probability}
A.~Edelman.
\newblock The probability that a random real gaussian matrix haskreal
  eigenvalues, related distributions, and the circular law.
\newblock {\em J. Multivar. Anal.}, 60(2):203--232, 1997.

\bibitem{forrester1998exact}
P.~Forrester.
\newblock Exact results for two-dimensional coulomb systems.
\newblock {\em Physics Reports}, 301(1-3):235--270, 1998.

\bibitem{forrester2010log}
P.~J. Forrester.
\newblock {\em Log-gases and {R}andom {M}atrices (LMS-34)}.
\newblock Princeton University Press, Princeton, 2010.

\bibitem{forrester2016analogies}
P.~J. Forrester.
\newblock Analogies between random matrix ensembles and the one-component
  plasma in two-dimensions.
\newblock {\em Nucl. Phys. B}, 904:253--281, 2016.

\bibitem{fyodorov1997almost}
Y.~V. Fyodorov, B.~A. Khoruzhenko, and H.-J. Sommers.
\newblock Almost {H}ermitian random matrices: crossover from {W}igner-{D}yson
  to {G}inibre eigenvalue statistics.
\newblock {\em Phys. Rev. Lett.}, 79(4):557--560, 1997.

\bibitem{ginibre1965statistical}
J.~Ginibre.
\newblock Statistical ensembles of complex, quaternion, and real matrices.
\newblock {\em J. Math. Phys.}, 6(3):440--449, 1965.

\bibitem{GT10}
F.~G\"{o}tze and A.~Tikhomirov.
\newblock The circular law for random matrices.
\newblock {\em Ann. Probab.}, 38(4):1444--1491, 2010.

\bibitem{graczyk2014strong}
P.~Graczyk and J.~Ma{\l}ecki.
\newblock Strong solutions of non-colliding particle systems.
\newblock {\em Electron. J. Probab.}, 19(119):1--21, 2014.

\bibitem{hastings2001eigenvalue}
M.~Hastings.
\newblock Eigenvalue distribution in the self-dual non-{H}ermitian ensemble.
\newblock {\em J. Stat. Phys.}, 103(5-6):903--913, 2001.

\bibitem{HM13}
H.~Hedenmalm and N.~Makarov.
\newblock Coulomb gas ensembles and {L}aplacian growth.
\newblock {\em Proc. Lond. Math. Soc. (3)}, 106(4):859--907, 2013.

\bibitem{hedenmalm2017planar}
H.~Hedenmalm and A.~Wennman.
\newblock Planar orthogogonal polynomials and boundary universality in the
  random normal matrix model.
\newblock {\em preprint arXiv:1710.06493}, 2017.

\bibitem{itoi1997universal}
C.~Itoi.
\newblock Universal wide correlators in non-gaussian orthogonal, unitary and
  symplectic random matrix ensembles.
\newblock {\em Nucl. Phys. B}, 493(3):651--659, 1997.

\bibitem{johansson1998fluctuations}
K.~Johansson.
\newblock On fluctuations of eigenvalues of random {H}ermitian matrices.
\newblock {\em Duke Math. J.}, 91(1):151--204, 1998.

\bibitem{KM13}
N.-G. Kang and N.~G. Makarov.
\newblock Gaussian free field and conformal field theory.
\newblock {\em Ast\'erisque}, (353):viii+136, 2013.

\bibitem{lehmann1991eigenvalue}
N.~Lehmann and H.-J. Sommers.
\newblock Eigenvalue statistics of random real matrices.
\newblock {\em Phys. Rev. Lett.}, 67(8):941--944, 1991.

\bibitem{livan2017introduction}
G.~Livan, M.~Novaes, and P.~Vivo.
\newblock {\em Introduction to {R}andom {M}atrices: {T}heory and {P}ractice}.
\newblock Springer, Cham, 2017.

\bibitem{Mehta}
M.~L. Mehta.
\newblock {\em Random {M}atrices}, volume 142 of {\em Pure and Applied
  Mathematics (Amsterdam)}.
\newblock Elsevier/Academic Press, Amsterdam, third edition, 2004.

\bibitem{olver2010nist}
F.~W. Olver, D.~W. Lozier, R.~F. Boisvert, and C.~W. Clark~(Editors).
\newblock {\em NIST Handbook of Mathematical Functions}.
\newblock Cambridge University Press, Cambridge, 2010.

\bibitem{ramirez2011beta}
J.~Ramirez, B.~Rider, and B.~Vir{\'a}g.
\newblock Beta ensembles, stochastic {A}iry spectrum, and a diffusion.
\newblock 24(4):919--944, 2011.

\bibitem{rogers1993interacting}
L.~Rogers and Z.~Shi.
\newblock Interacting {B}rownian particles and the {W}igner law.
\newblock {\em Probab. Theory Relat. Fields}, 95(4):555--570, 1993.

\bibitem{ST97}
E.~B. Saff and V.~Totik.
\newblock {\em Logarithmic potentials with external fields}, volume 316 of {\em
  Grundlehren der Mathematischen Wissenschaften [Fundamental Principles of
  Mathematical Sciences]}.
\newblock Springer-Verlag, Berlin, 1997.
\newblock Appendix B by Thomas Bloom.

\bibitem{serfaty2017microscopic}
S.~Serfaty.
\newblock Microscopic description of log and {C}oulomb gases.
\newblock {\em preprint arXiv:1709.04089}, 2017.

\bibitem{tao2015random}
T.~Tao and V.~Vu.
\newblock Random matrices: universality of local spectral statistics of
  non-{H}ermitian matrices.
\newblock {\em Ann. Probab.}, 43(2):782--874, 2015.

\bibitem{valko2009continuum}
B.~Valk{\'o} and B.~Vir{\'a}g.
\newblock Continuum limits of random matrices and the {B}rownian carousel.
\newblock {\em Invent. Math.}, 177(3):463--508, 2009.

\bibitem{valko2017sine}
B.~Valk{\'o} and B.~Vir{\'a}g.
\newblock The sine-$\beta$ operator.
\newblock {\em Inventiones mathematicae}, 209(1):275--327, 2017.

\bibitem{zabrodin2006large}
A.~Zabrodin and P.~Wiegmann.
\newblock Large-{N} expansion for the 2d {D}yson gas.
\newblock {\em J. Phys. A: Math. Gen.}, 39(28):8933--8964, 2006.

\end{thebibliography}
%%%%%%%%%%%%%%%%%%%%%%%%%%%%%%%%%%%%%%%%%%%%%%%%%%%%%%%%%%%%%%

\end{document}